\documentclass[12pt,letterpaper]{article}
\usepackage[dvipsnames]{xcolor}
\usepackage[margin=1in,footskip=0.5in]{geometry}
\usepackage[T1]{fontenc}
\usepackage{verbatim}
\usepackage{amsthm}
\usepackage{amsmath}
\usepackage{amssymb}
\usepackage{caption}
\usepackage{natbib}
\usepackage{soul}
\usepackage{bbm}
\usepackage{setspace}
\usepackage{subfigure}
\usepackage{xcolor}
\usepackage{graphics}
\usepackage{epsfig}
\usepackage{geometry}
\usepackage{accents}

\usepackage{enumitem}
\usepackage{setspace}
\usepackage{array}
\newcolumntype{x}[1]{>{\centering\arraybackslash\hspace{0pt}}p{#1}}
\usepackage{tikz}
\usetikzlibrary{patterns}
\usetikzlibrary{patterns.meta}
\pgfdeclarepattern{
  name=hatch,
  parameters={\hatchsize,\hatchangle,\hatchlinewidth},
  bottom left={\pgfpoint{-.1pt}{-.1pt}},
  top right={\pgfpoint{\hatchsize+.1pt}{\hatchsize+.1pt}},
  tile size={\pgfpoint{\hatchsize}{\hatchsize}},
  tile transformation={\pgftransformrotate{\hatchangle}},
  code={
    \pgfsetlinewidth{\hatchlinewidth}
    \pgfpathmoveto{\pgfpoint{-.1pt}{-.1pt}}
    \pgfpathlineto{\pgfpoint{\hatchsize+.1pt}{\hatchsize+.1pt}}
    \pgfpathmoveto{\pgfpoint{-.1pt}{\hatchsize+.1pt}}
    \pgfpathlineto{\pgfpoint{\hatchsize+.1pt}{-.1pt}}
    \pgfusepath{stroke}
  }
}

\tikzset{
  hatch size/.store in=\hatchsize,
  hatch angle/.store in=\hatchangle,
  hatch line width/.store in=\hatchlinewidth,
  hatch size=5pt,
  hatch angle=0pt,
  hatch line width=.5pt,
}
\usepackage{pifont}

\usepackage{avant}

\newcolumntype{L}[1]{>{\raggedright\let\newline\\\arraybackslash\hspace{0pt}}m{#1}}
\newcolumntype{C}[1]{>{\centering\let\newline\\\arraybackslash\hspace{0pt}}m{#1}}
\newcolumntype{R}[1]{>{\raggedleft\let\newline\\\arraybackslash\hspace{0pt}}m{#1}}

\makeatletter \renewenvironment{proof}[1][\proofname]
{\par\pushQED{\qed}\normalfont\topsep6\p@\@plus6\p@\relax\trivlist\item[\hskip\labelsep\bfseries#1\@addpunct{.}]\ignorespaces}{\popQED\endtrivlist\@endpefalse} \makeatother

\theoremstyle{plain}

\newtheorem{lemma}{Lemma}

\newtheorem{rmk}{Remark}
\newtheorem{corollary}{Corollary}
\newtheorem{proposition}{Proposition}

\usepackage{multirow}
\usepackage{chngcntr}
\usepackage{apptools}
\AtAppendix{\counterwithin{lemma}{section}}
\AtAppendix{\counterwithin{proposition}{section}}

\definecolor{uncblue}{RGB}{75, 165, 211}
\definecolor{nberblue}{RGB}{0, 90, 155}
\definecolor{resgreen}{RGB}{34, 84, 67}
\definecolor{pennblue}{RGB}{0, 44, 119}
\definecolor{pennred}{RGB}{152, 30, 50}

\usepackage{hyperref}
\hypersetup{
hypertexnames=false,
pdftitle={Robust Capital Budgeting},
pdfauthor={Yunzhi Hu, Fei Li, Hanqing Ye, Dihan Zou},
pdfborder = {0 0 0},
urlbordercolor = {0 0 0},
colorlinks=true,
linkcolor=pennblue,
urlcolor=nberblue,
citecolor=pennblue}

\newcounter{parentnumber}

\begin{document}
\newcommand{\regret}{\ensuremath{\mathcal{R}}}

\newcommand{\LB}{\ensuremath{\text{LB}}}

\newcommand{\wcr}{\ensuremath{\overline{\mathcal{R}}}}

\newcommand{\menu}{\ensuremath{\mathcal{K}}}

\newcommand{\envir}{\ensuremath{\mathcal{E}}}

\newcommand{\capex}{\ensuremath{\overline{k}}}

\newcommand{\R}{\ensuremath{\mathbb{R}}}

\providecommand{\capex}{\ensuremath{\overline{k}}}

\title{Robust Capital Budgeting}
\author{Yunzhi Hu \and Fei Li \and Hanqing Ye \and Dihan Zou\thanks{
Hu: UNC Kenan-Flagler Business School (\texttt{Yunzhi\_Hu@kenan-flagler.unc.edu});
Li: Department of Economics, University of North Carolina at Chapel Hill (\texttt{lifei@email.unc.edu});
Ye: Department of Economics, University of North Carolina at Chapel Hill (\texttt{hanqing@unc.edu});
Zou: Department of Economics, University of North Carolina at Chapel Hill (\texttt{dihan@email.unc.edu}).
}}
\date{\today}

\maketitle

\begin{abstract}
\noindent
We study capital budgeting when a manager privately knows project productivity and her empire-building benefits, while headquarters lacks credible priors. Headquarters minimizes maximum regret relative to informed investment. Without compensation, the optimal policy is a spending cap balancing empire-building overinvestment against profitable investment forgone. Compensation helps when empire-building incentives are weak. The optimal payment incentivizes restraint, declines with investment, and reaches zero at the spending cap, supporting greater managerial discretion. When empire-building incentives are strong, compensation cannot distinguish productive investment from biased expansion; therefore, a spending cap alone is optimal.
\end{abstract}

\newpage

\section{Introduction}

Capital budgeting is one of the central problems in corporate finance. It often involves long-lived projects whose value depends on cash flows far into the future. Evaluating these projects therefore requires forecasts precisely where firms have the least confidence in them. Indeed, \citet{Graham2022} reports that, although many companies prepare five- or ten-year plans, CFOs regard only about the first two years of their forecasts as reliable. This observation raises a natural question: how should a firm conduct long-term capital budgeting when uncertainty about the future makes it difficult to formulate credible beliefs?

Passing up an unusually profitable investment can be a consequential mistake. Reflecting on Berkshire Hathaway's capital allocation, Charlie Munger writes that ``almost all huge errors were in not making a purchase.''\footnote{Charles T. Munger, Berkshire Hathaway 2014 Annual Report, p.~42, \href{https://www.berkshirehathaway.com/letters/2014ltr.pdf}{available here}. A recent Invesco commentary makes the same point in different terms: waiting for uncertainty to disappear can be costly because prices and opportunities may adjust before the risks become clear. See Brian Levitt, ``FOMO isn't about ignoring risk,'' Invesco, April 28, 2026, \href{https://www.invesco.com/us/en/insights/fomo-market-risk-investing-missing-out.html}{available here}.} This observation highlights a concern that goes beyond the conventional NPV criterion: the potential magnitude of profits forgone when investment opportunities are rejected. 
An overly restrictive capital budget can therefore 
prevent the firm from exploiting a highly productive opportunity. This concern is especially salient for start-ups, new technologies, drug development, and other exploratory projects whose returns are highly uncertain but potentially exceptional. Public discussions often describe the desire to avoid this ex-post regret as fear of missing out, or FOMO.\footnote{The UK Financial Conduct Authority discusses FOMO in the setting of risky start-ups and new cryptoassets presented as scarce, once-only opportunities. It also warns investors that they may lose all the capital committed and should consider whether the party providing information has motives that are not aligned with their interests. See ``Hype---spot the signs and manage your FOMO,'' updated January 19, 2026, \href{https://www.fca.org.uk/investsmart/hype-spot-signs-manage-your-fomo}{available here}. We use FOMO only as shorthand for concern about forgoing a rare, high-upside opportunity, without requiring social comparison or imitation.} Yet an overly permissive capital budget can lead to overinvestment because of well-documented managerial empire-building incentives, creating the opposite regret when the project performs poorly. Balancing these two concerns depends on how frequently different investment opportunities arise. However, forming such expectations can be difficult, especially in areas unfamiliar to headquarters. How should headquarters design its capital budget without reliable beliefs about the opportunities it may face?

To answer this question, we build a canonical capital budgeting model in which a project manager privately knows the project's expected returns at different investment levels, as well as her own empire-building benefits. Headquarters knows the range of possible returns and private benefits but lacks reliable priors over either. It designs the capital budgeting process to minimize the maximum regret it could incur across all possible investment environments. Here, regret is defined  in the sense of \citet{Savage1951}, and it measures the payoff lost relative to what headquarters could have achieved under complete information. The criterion captures the headquarters' aversion to lost opportunities arising from limited information. We show that, without compensation, the optimal policy is a spending cap that balances the loss from forgone investment profits against that from the manager's empire building. Furthermore, carefully designed compensation can improve on this policy only when incentives for empire building are sufficiently low. In this case, headquarters can maintain a higher spending cap and give the manager more discretion.

Specifically, we introduce a model with one headquarters and one project manager. The manager privately observes how project output and her private benefit vary with the amount of capital invested. By contrast, headquarters knows only that both functions are nondecreasing and concave and that their marginal increments satisfy known upper bounds. Unlike in a standard Bayesian framework, headquarters has no prior over either function. Instead, it chooses the investment levels available to the manager and the compensation attached to those choices to minimize maximum regret.

First, we study the model in which headquarters cannot offer additional compensation, which corresponds to the capital-{rationing} problem in \citet{HarrisRaviv1996}. We show that the optimal policy is a simple spending cap: the manager has full discretion to invest up to a fixed limit. She reaches the cap when either project returns or private benefits make expansion sufficiently attractive, but leaves part of the budget unspent otherwise. The choice of this spending cap captures a familiar tradeoff as in the delegation literature between commitment and flexibility. The optimal limit balances the worst loss from excessive spending against that from profitable investment forgone. This limit goes up when the project can be more productive and falls when potential private benefits increase. The simplicity of this policy is consistent with \citet[p.~1981]{Graham2022}, who suggests that ``[c]onservative and simple rules may work as well as any'' in a complex environment. It also gives formal content to his observation that conservative slack may reflect managers' recognition that tail risks cannot be precisely anticipated in either likelihood or magnitude \citep[p.~2034, n.~57]{Graham2022}.

Next, we consider the role of compensation. We first show that, when incentives for empire-building are strong, compensation cannot further reduce worst-case regret below what headquarters achieves with the optimal spending cap alone. The reason is that an unproductive manager with large private benefits can value investment exactly as a productive manager without such bias does, so compensation cannot distinguish their motives. Moreover, persuading a manager with strong private benefits to invest below the cap requires compensation at least as large as the resulting reduction in the firm's overinvestment loss. By contrast, when empire-building incentives are weak, optimally designed compensation can help. Under the optimal policy, the manager is rewarded for low investment with higher compensation; this compensation decreases with investment and reaches zero when the spending cap is met. The manager therefore gives up compensation when she expands the project. A sufficiently productive project makes this sacrifice worthwhile, whereas a manager with a weak project is induced to preserve capital. Headquarters can consequently allow a higher spending cap and give the manager more discretion while reducing worst-case regret.

\paragraph{Related literature.} 
We contribute to the literature on capital budgeting with asymmetric information. \citet{HarrisRaviv1996} study capital allocation to a manager who privately knows project productivity and enjoys private benefits from investment; their optimal procedure combines a spending limit with costly auditing of requests for additional funding. \citet{HarrisRaviv1998} extend this analysis to allocating resources across projects. \citet{BernardoCaiLuo2001,BernardoCaiLuo2004} jointly study capital allocation and compensation in the presence of private information and managerial effort, emphasizing performance-related pay. Whereas this literature largely adopts a Bayesian mechanism-design approach, our main departure is that headquarters cannot formulate a credible prior and instead minimizes maximum regret.

Our paper belongs to the literature on robust mechanism design that uses minimax regret to limit the worst-case loss relative to what could have been achieved had the underlying environment been known. \citet{BergemannSchlag08} study monopoly pricing without a prior over buyers' valuations, where regret balances missed sales against revenue forgone from low prices. \citet{GuoShmaya2023} study regret-minimizing project choice within organizations, and \citet{GuoShmaya2025} apply the criterion to monopoly regulation without known demand and cost functions. We extend this approach to capital budgeting, where regret reflects investment distortions and compensation paid to an informed manager. To our knowledge, despite a substantial literature in finance on robust design under max--min criteria, including \citet{MalenkoTsoy}, ours is the first to study robust design in finance using minimax regret.

Our spending-cap result also connects the paper to the delegation literature initiated by \citet{Holmstrom1984}. \citet{AlonsoMatouschek2008} characterize optimal delegation to an informed, biased agent and give conditions under which the principal offers an interval of decisions. Our spending cap preserves the manager's ability to use private information while limiting biased expansion, and we establish its optimality among arbitrary closed sets of investment choices. \citet{Frankel2014} and \citet{AlonsoGanHu2026} study delegation robust to uncertainty about the agent's preferences while retaining a prior over external states. We instead allow uncertainty about both productive opportunities and private benefits, jointly designing managerial discretion and nonnegative compensation.

\medskip    

The remainder of the paper is organized as follows. Section \ref{sec:model} presents the model and the contracting reduction. Section \ref{sec:control} studies capital rationing without compensation. Section \ref{sec:joint} characterizes the joint choice of investment authority and compensation and develops its comparative statics. Section \ref{sec:extensions} discusses some extensions. Proofs are in the appendix.

\section{Model}\label{sec:model}\label{maintext:start}

\subsection{Investment, information, and managerial preferences}
Consider a firm with two dates, $t=0$ and $1$. At date 0, headquarters allocates $k\in[0,K]$ to a project, where $K>0$ is the available capital budget. At date 1, the project produces $q(k)$ and generates operating profit
\[
\pi(k;q):=\frac{q(k)}{1+r}-k,
\]
where $r \in (0,1)$ is the cost of capital, which we also refer to as the discount rate throughout the paper.
Given our focus on capital budgeting, the interval between two dates is long, potentially spanning many years. Besides operating profit, the project manager receives a private benefit $b(k)$ from project scale, capturing empire-building incentives.

The manager privately observes $q(\cdot)$ and $b(\cdot)$. Headquarters knows neither function and has no prior over them. It knows only that both functions are nonnegative, nondecreasing, and concave and satisfy
\begin{equation}\label{eq:bounds}
0\leq q(k)\leq Ak,\qquad 0\leq b(k)\leq Bk
\quad\text{for all }k\in[0,K],
\end{equation}
where $A>1+r$ and $B>0$ are common knowledge.  
Concavity captures diminishing returns to productive investment and empire building.
Write the present value of maximum return as $\hat A:=\frac{A}{1+r} > 0$.  Let $\envir$ denote the set of admissible pairs $(q,b)$. These restrictions imply $q(0)=b(0)=0$ and bound marginal increments in output and private benefits by $A$ and $B$, respectively. Both functions are continuous and differentiable almost everywhere.

\subsection{Capital budgeting and direct investment contracts}\label{sec:contracts}
The manager's private information consists of the two functions $q(\cdot)$ and $b(\cdot)$. Designing a capital budgeting procedure therefore appears to be an infinite-dimensional mechanism-design problem. In general, such a procedure allows the manager to submit a message $m$ from a set $M$. A message may be a funding request, a business plan, or a detailed account of the project. In response, headquarters authorizes investment $\kappa(m)$ and pays compensation $\omega(m)\geq 0$. Formally, a mechanism $(M,\kappa,\omega)$ consists of a message set $M$, an investment rule $\kappa:M\to[0,K]$, and a compensation rule $\omega:M\to\R_+$. Headquarters commits to these rules before the manager chooses her message. The payoffs associated with the mechanism are
\begin{align*}
U_P(m;q,b)&=\pi(\kappa(m);q)-\omega(m),\\
U_A(m;q,b)&=\pi(\kappa(m);q)+b(\kappa(m))+\omega(m).
\end{align*}
Given $(q,b)$, the manager chooses a message that maximizes $U_A$. 
We assume the manager's outside option is zero and that some message leads to zero investment. Together with limited liability, this guarantees participation.
We call a mechanism \textit{regular} if its available investment--payment pairs form a nonempty compact set.\footnote{
That is, $\mathcal O:=\{(\kappa(m),\omega(m)):m\in M\}$
is a nonempty compact subset of $[0,K]\times\R_+$.
Compactness bounds payments, includes limiting outcomes,
and guarantees the existence of the manager's best response.
Because some message leads to zero investment and payments
are nonnegative, we may also assume $\mathbf{0}\in\mathcal O$
without loss.
}

Directly optimizing over arbitrary message spaces leads to unnecessary complication. The standard revelation principle offers little simplification because the manager's private type consists of a pair of functions. The next lemma reduces its dimension. The result is related to the taxation principle \citep[pp.~257--258]{FT1991game}, under which a mechanism can be represented by a tariff over allocations. Limited liability requires a slightly different representation here. Because compensation must be nonnegative, headquarters cannot generally exclude an investment through the payment schedule alone while preserving the mechanism's payoffs. A contract must therefore specify both the permitted investment set $\menu$ and the compensation $w(k)$ attached to each permitted investment. The informed manager then chooses among these investment--compensation pairs.

Formally, let
\[
\menu:=\kappa(M) \text{\quad and \quad }
w(k):=\max_{m:\,\kappa(m)=k}\omega(m) \text{ for }k\in\menu.
\]
The set $\menu$ contains all investment levels that can result from some message. For each investment $k\in\menu$, $w(k)$ is the largest compensation offered by any message that leads to $k$.
A \textit{direct investment contract}, or simply a \textit{contract}, is a nonempty closed set $\menu\subseteq[0,K]$ of permitted investments that contains $0$, and a finite, nonnegative, upper-semicontinuous payment function $w:\menu\to\R_+$.\footnote{
Upper semicontinuity means that $w(k)\geq\limsup_n w(k_n)$ whenever $k_n\to k$ within $\menu$. Together with compactness of $\menu$, it guarantees that the manager attains a best choice. Appendix \ref{app:lemma} verifies this property for the construction above and gives a regular mechanism representation for every contract in this class.
} The payoffs associated with a contract become
\begin{align*}
U_P(k;q,w)&:=\pi(k;q)-w(k),\\
U_A(k;q,b,w)&:=\pi(k;q)+b(k)+w(k).
\end{align*}
The informed manager chooses
\begin{equation}\label{BR}\tag{BR}
k\in\arg\max_{x\in\menu}U_A(x;q,b,w).
\end{equation}

\begin{samepage}
\begin{lemma}[Payoff equivalence]\label{lem_direct}
For every regular mechanism, there is a direct investment contract that induces exactly the same set of payoff pairs for headquarters and the manager for every admissible pair $(q,b)$, as the manager ranges over all her best responses. Conversely, every direct investment contract has a regular mechanism representation with this property. 
\end{lemma}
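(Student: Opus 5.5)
The proof has two directions, and each is a direct construction. Start from a regular mechanism $(M,\kappa,\omega)$ with outcome set $\mathcal O=\{(\kappa(m),\omega(m)):m\in M\}$, which is nonempty and compact. Define $\menu:=\kappa(M)$, the projection of $\mathcal O$ onto the first coordinate, and set $w(k):=\max\{x:(k,x)\in\mathcal O\}$.

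First verify that $(\menu,w)$ is a direct investment contract.
\begin{itemize}
\item $\menu$ is closed, because it is the continuous image of a compact set.
\item $0\in\menu$, because some message leads to zero investment.
\item The maximum defining $w(k)$ is attained and finite, because each slice $\{x:(k,x)\in\mathcal O\}$ is compact.
\item $w$ is upper semicontinuous. Take $k_n\to k$ in $\menu$ and a subsequence along which $w(k_n)\to\limsup_n w(k_n)$. The points $(k_n,w(k_n))$ lie in $\mathcal O$, so by compactness their limit $(k,\limsup_n w(k_n))$ also lies in $\mathcal O$. Hence $w(k)\geq \limsup_n w(k_n)$.
\end{itemize}

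Next compare best responses. Both payoffs depend on a message only through its outcome $(\kappa(m),\omega(m))$. The manager's payoff $U_A$ is strictly increasing in $\omega$, so any best response $m^*$ must satisfy $\omega(m^*)=w(\kappa(m^*))$. Otherwise a message with the same investment and higher payment would do strictly better. Therefore
\[
\max_{m}U_A(m;q,b)=\max_{k\in\menu}U_A(k;q,b,w).
\]
The set of mechanism best responses maps onto the set of pairs $(k,w(k))$ with $k$ a contract best response: every optimal $k$ is realized by a message attaining $w(k)$, and every optimal message yields such a pair. Because $U_P$ and $U_A$ are functions of $(k,\omega)$, the induced sets of payoff pairs coincide for each $(q,b)\in\envir$.

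For the converse, take a contract $(\menu,w)$. Use $M=\menu$, $\kappa(m)=m$, and $\omega(m)=w(m)$. The outcome set is then the graph $\{(k,w(k)):k\in\menu\}$, and we must show it is compact.
\begin{itemize}
\item It is bounded. Since $w$ is upper semicontinuous on the compact set $\menu$, it attains a finite maximum.
\item It is closed. Suppose $(k_n,w(k_n))\to(k,x)$. Upper semicontinuity gives $x\leq w(k)$, but equality is not automatic.
\end{itemize}

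This failure of closedness is the main obstacle. The fix is to take $\mathcal O$ to be the hypograph-type set $\{(k,x):k\in\menu,\ 0\leq x\leq w(k)\}$, realized with message set $M=\mathcal O$. This set is closed precisely because $w$ is upper semicontinuous, and it is bounded, so it is compact. It contains $\mathbf 0$, so some message leads to zero investment.

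The extra pairs $(k,x)$ with $x<w(k)$ are never best responses, since they are strictly dominated by $(k,w(k))$. The argument of the first direction then applies verbatim, because $\max\{x:(k,x)\in\mathcal O\}=w(k)$. This gives payoff equivalence for every $(q,b)$.
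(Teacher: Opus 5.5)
Your proof is correct and follows essentially the same route as the paper's: you take $\menu$ to be the projection of the compact outcome set and $w(k)$ to be the largest payment at each investment, prove upper semicontinuity by the same subsequence-and-compactness argument, and show that best-response messages correspond exactly to the pairs $(k,w(k))$. For the converse you use the same message space $\{(k,x):k\in\menu,\ 0\leq x\leq w(k)\}$ as the paper; your remark that the graph of $w$ need not be closed explains why this set, rather than the graph, is needed.
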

\end{samepage}
The intuition is as follows. If two messages lead to the same investment, every manager prefers the one that pays more. The lower-paying message can therefore be removed without changing any best response or either party's payoff. Retaining only the highest payment at each investment reduces an arbitrary reporting procedure to $(\menu,w)$. Thus, headquarters need not elicit either private function. Instead, it specifies the available investments and the compensation attached to each, and the manager uses her information to select an investment. Compactness and upper semicontinuity imply that the relevant maxima are attained. The proof in Appendix \ref{app:lemma} also establishes the converse.

\begin{rmk}[No output-contingent compensation]
We restrict compensation to depend on the budgeting decision rather than realized output, which we denote by $y=q(k)$. This is because output may be realized many years after the investment decision, whereas compensation may need to be determined and paid much earlier. Output may also be difficult to verify or contract upon, especially for new technologies, exploratory projects, and investments whose benefits arise elsewhere in the firm. This restriction is also important for Lemma \ref{lem_direct}: at a given investment, all managers rank two messages by their compensation alone. 
\end{rmk}

\subsection{Headquarters' objective: minimax regret}\label{sec:regret}
Headquarters chooses a contract without knowing either the production function $q$ or the manager's private-benefit function $b$. Because it has no prior over $(q,b)$, it cannot rank contracts by expected profit. A popular alternative is maximizing the headquarters' worst-case profit. However, given that the admissible set includes an entirely unproductive project, $q(k)=0$ for all $k$, the result would be trivial. No investment will ever be made, regardless of how productive the project may turn out to be.

Motivated by the FOMO discussion, we evaluate a contract relative to the profit headquarters could earn if it knew the production function and chose investment itself. Formally, this informed-investment benchmark is
\[
V(q):=\max_{k\in[0,K]}\pi(k;q).
\]
Note that this benchmark payoff only depends on $q$, and that it is nonnegative because zero investment is feasible.

We measure the performance of a contract by its \textit{regret} in the sense of \citet{Savage1951} and \citet{Milnor1954}: the profit headquarters loses relative to the informed benchmark. For a contract $(\menu,w)$ and a managerial best response $k$, the regret is
\[
\regret(k,q):=V(q)-U_P(k;q,w)=V(q)-\pi(k;q)+w(k).
\]
This loss includes both excessive investment and profitable investment forgone, as well as the compensation paid to the manager. Unlike worst-case profit, regret distinguishes an appropriate decision not to invest in an unproductive project (low regret) from a costly failure to invest in a productive one (high regret). 

Headquarters evaluates a contract by its largest regret across all admissible pairs $(q,b)$ and all managerial best responses:\footnote{
The inner maximum imposes adverse tie breaking: if several investments maximize the manager's payoff, headquarters evaluates the contract using the one that produces the greatest regret.
}
\begin{equation}\label{eq:worst-regret}
\wcr(\menu,w):=
\max_{(q,b)\in\envir}\ 
\max_{k\text{ satisfying }\eqref{BR}}
\{\regret(k, q)\}.
\end{equation}
 Headquarters then solves the minimax-regret problem
\begin{equation}\label{directDesign}\tag{P}
\min_{\menu,w}\wcr(\menu,w),
\end{equation}
over the contracts defined in Section \ref{sec:contracts}. 

Headquarters has two instruments. It can ration the capital available to the manager and condition compensation on the chosen investment. More stringent capital rationing limits empire building but may constrain productive projects. Compensation can reward restraint 
but it is costly and may also discourage productive investment. Each instrument generates regret in two directions.

\section{Least Regret Capital Rationing}\label{sec:control}

\subsection{Optimal Spending Cap}

We start with the case in which compensation is unavailable, so $w\equiv0$. This isolates the role of capital rationing. It is a natural benchmark when project managers receive fixed salaries, when compensation is determined separately from individual investment decisions, or when unused project funds cannot be paid to the manager as a bonus.
 By Lemma \ref{lem_direct}, headquarters chooses a nonempty closed menu $\menu$ of permitted investments, and the manager selects
\[
k_A\in\arg\max_{k\in\menu}\left\{\frac{q(k)}{1+r}-k+b(k)\right\}.
\]
Intuitively, headquarters' choice of capital trades off two sources of regret. A restrictive menu may prevent the manager from exploiting a productive opportunity, whereas a permissive menu may allow private benefits to induce excessive investment. Let us illustrate using two corner cases. Specifically, let
\[
\alpha:=\hat A-1>0,\quad \beta:=\min\{B,1\}.
\]

\paragraph{No capital rationing.} If the manager can choose any $k\in[0,K]$, headquarters' worst-case regret is 
\[
\wcr([0,K],0)=\beta K.
\]
For $B\geq1$, the admissible pair $q(k)=0$ and $b(k)=Bk$ makes the manager exhaust the budget even though informed headquarters would invest nothing. Headquarters loses the investment $K$. For $B<1$, headquarters' loss arises under $q(k)=(1+r)(1-B)k$ and $b(k)=Bk$: the manager is indifferent across investments, headquarters loses $Bk$ at investment $k$, and adverse tie breaking selects $K$. 

\paragraph{No investment.} If headquarters may permit only $k=0$, its worst-case regret is 
\[
\wcr(\{0\},0)=\alpha K.
\]
This is attained by $q(k)=Ak$ and informed headquarters would invest the entire budget. Intuitively, this policy eliminates exposure to excessive investment but forgoes every productive opportunity.

\paragraph{Interval menus.} Headquarters may, in principle, choose any closed menu, including one with gaps. We show that it is without loss of generality to focus on interval menus of the form $[0,\capex]$. To see this, let $\capex$ be the maximum investment in any menu. Under $q(k)=Ak$ and $b(k)=0$, the manager chooses $\capex$, while informed headquarters would invest $K$, generating regret $\alpha(K-\capex)$. Under the unproductive project described above---$q(k)=0$ and $b(k)=Bk$ when $B\geq1$, and $q(k)=(1+r)(1-B)k$ and $b(k)=Bk$ when $B<1$---the manager again chooses $\capex$, while informed headquarters would invest zero, generating regret $\beta\capex$. Hence, every menu with maximum $\capex$ has worst-case regret of at least
\[
\max\{\alpha(K-\capex),\beta\capex\}.
\]

We now show that the interval menu $[0, \capex]$ attains this lower bound. Let $k_P$ attain the informed benchmark $V(q)$, and let $k_A$ be the manager's choice from $[0,\capex]$. Suppose first that $k_P\leq\capex$. Because $k_P$ is available, the manager's optimality implies
\[
\pi(k_A;q)+b(k_A)\geq\pi(k_P;q)+b(k_P) \Rightarrow \pi(k_P;q)-\pi(k_A;q)\leq b(k_A)-b(k_P).
\]
If $k_A<k_P$, monotonicity of private benefits makes the right-hand side nonpositive. Because $k_P$ maximizes headquarters' profit, the left-hand side is nonnegative. Therefore, both sides must equal zero, so headquarters incurs no regret.
 If $k_A\geq k_P$, the above  inequality becomes \[
 \pi(k_P;q)-\pi(k_A;q)\leq B(k_A-k_P),
 \]
 using the bound on private benefits. Meanwhile, by monotonicity of the output, 
\[
\pi(k_P;q)-\pi(k_A;q) \leq k_A - k_P,
\]
i.e., the maximum profit difference is the saved investment cost. 
Thus, regret is at most $\beta\capex$. Suppose instead that $k_P>\capex$. Because $\capex$ is available, managerial optimality and the monotonicity of private benefits imply
\[
\pi(k_A;q)\geq\pi(\capex;q).
\]
The bound on the growth of output then gives
\[
\pi(k_P;q)-\pi(k_A;q)
\leq\pi(k_P;q)-\pi(\capex;q)
\leq\alpha(K-\capex).
\]
Combining the two cases with the lower bound above yields
\[
\wcr([0,\capex],0)=\max\{\alpha(K-\capex),\beta\capex\}.
\]
Therefore, it is without loss of generality to focus on interval menus $[0,\capex]$. It remains only to choose $\capex$. Given that the first loss decreases with the cap, whereas the second increases with it, the maximum is minimized where the two losses are equal. We have the following result.

\begin{proposition}\label{prop_delegation}
When compensation is infeasible, an optimal capital menu is $\menu_0^*=[0,\capex_0]$, where
\begin{equation}\label{eq:cap-zero}
\capex_0=\frac{\alpha}{\alpha+\beta}K
=\frac{\hat A-1}{\hat A+\min\{B-1,0\}}K.
\end{equation}
Its worst-case regret is
\begin{equation}\label{eq:regret-zero}
\regret_0^*=\beta\capex_0
=\alpha(K-\capex_0).
\end{equation}
\end{proposition}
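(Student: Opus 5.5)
By Lemma \ref{lem_direct} with $w\equiv 0$, a contract is just a nonempty closed menu $\menu\subseteq[0,K]$ containing $0$. The plan has three steps: prove a lower bound on $\wcr(\menu,0)$ that depends only on $\capex:=\max\menu$, show the interval $[0,\capex]$ attains it, and then optimize over $\capex$.

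\textbf{Lower bound.} The first test environment is $(q,b)=(Ak,0)$, which is admissible. The manager's payoff $\alpha k$ is strictly increasing, so she picks $\capex$. Informed headquarters picks $K$. The regret is $\alpha(K-\capex)$.

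The second test environment depends on $B$. If $B\ge 1$, take $q\equiv 0$ and $b=Bk$; the manager's payoff is $(B-1)k$, so she (weakly) prefers $\capex$, and adverse tie breaking selects it when $B=1$. If $B<1$, take $q=(1+r)(1-B)k$ and $b=Bk$; this is admissible since $(1+r)(1-B)\le A$. The manager is indifferent across $\menu$, and adverse tie breaking picks $\capex$. In both cases $V(q)=0$ and $\pi(\capex;q)=-\beta\capex$, so the regret is $\beta\capex$. Hence
\[
\wcr(\menu,0)\ge\max\{\alpha(K-\capex),\beta\capex\}.
\]

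\textbf{Upper bound for $[0,\capex]$.} Fix $(q,b)$, a maximizer $k_P$ of $\pi(\cdot;q)$, and any best response $k_A\in[0,\capex]$. There are two cases.

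\emph{Case $k_P\le\capex$.} Optimality of $k_A$ against $k_P$ gives
\[
\pi(k_P)-\pi(k_A)\le b(k_A)-b(k_P).
\]
If $k_A<k_P$, the right side is $\le 0$ while the left side is $\ge 0$, so the regret is zero. If $k_A\ge k_P$, the bound $b(k)\le Bk$ alone does not control the increment. I would use concavity with $b(0)=0$: for $x\ge y$, $b(x)-b(y)\le B(x-y)$, since the slopes of a concave function with $b(0)=0$ are at most $b(y)/y\le B$ on $[y,x]$. Monotonicity of $q$ gives $\pi(k_P)-\pi(k_A)\le k_A-k_P$. So the regret is at most $\min\{B,1\}(k_A-k_P)\le\beta\capex$.

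\emph{Case $k_P>\capex$.} Comparing $k_A$ with $\capex$ and using $b(k_A)\le b(\capex)$ gives $\pi(k_A)\ge\pi(\capex)$. The same increment bound for $q$, namely $q(x)-q(y)\le A(x-y)$, gives
\[
\pi(k_P)-\pi(\capex)\le(\hat A-1)(k_P-\capex)\le\alpha(K-\capex).
\]

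\textbf{Optimization and main obstacle.} Combining the two steps, $\wcr([0,\capex],0)=\max\{\alpha(K-\capex),\beta\capex\}$, and every menu with the same maximum does at least this badly. The first term is strictly decreasing in $\capex$ and the second strictly increasing, so the minimum over $\capex\in[0,K]$ is at equality. This gives $\capex_0=\alpha K/(\alpha+\beta)\in(0,K)$, the regret formula \eqref{eq:regret-zero}, and the rewriting $\alpha+\beta=\hat A+\min\{B-1,0\}$.

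The only delicate step is justifying the marginal-increment bounds $q(x)-q(y)\le A(x-y)$ and $b(x)-b(y)\le B(x-y)$ from concavity plus the linear envelopes \eqref{eq:bounds}. The argument is the chord-slope monotonicity for concave functions vanishing at $0$, which I would state as a small auxiliary fact.
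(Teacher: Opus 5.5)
Your proposal is correct and follows essentially the same route as the paper. The paper also uses the environment $(Ak,0)$ together with an unproductive linear project to get the lower bound $\max\{\alpha(K-\capex),\beta\capex\}$ for any menu with maximum $\capex$, the same two cases ($k_P\le\capex$ versus $k_P>\capex$) to show that $[0,\capex]$ attains it, and the same balancing of the two terms to obtain $\capex_0$. The differences are cosmetic: for $B\ge1$ the appendix uses $b(k)=\beta k=k$, which makes the manager indifferent, where you use $b(k)=Bk$, and you spell out the chord-slope justification of the Lipschitz increment bounds that the paper takes as immediate from concavity and the linear envelopes.
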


Figure \ref{fig:cap-regret-tradeoff} illustrates the construction. A lower cap sacrifices too much productive investment, whereas a higher cap permits too much loss from empire building. The least-regret cap balances these two losses.

\begin{figure}[htbp]
\centering
\begin{tikzpicture}[x=10cm,y=4cm,font=\footnotesize]
  \def\capFigAlpha{1.2}
  \def\capFigBeta{0.8}
  \pgfmathsetmacro{\capFigOpt}{\capFigAlpha/(\capFigAlpha+\capFigBeta)}
  \pgfmathsetmacro{\capFigRegret}{\capFigBeta*\capFigOpt}
  \tikzset{
    cap productive/.style={draw=pennblue,line width=0.9pt,dash pattern=on 4pt off 3pt},
    cap unprofitable/.style={draw=orange!85!black,line width=0.9pt,dash pattern=on 4pt off 3pt},
    cap envelope/.style={draw=black,line width=1.8pt,line join=round},
    cap guide/.style={draw=black!40,densely dotted,line width=0.6pt}
  }

  \draw[cap productive] (0.03,-0.4)--(0.10,-0.4);
  \node[anchor=west,align=left] at (0.12,-0.4)
    {Productive project\\$\alpha(K-\capex)$};
  \draw[cap unprofitable] (0.60,-0.4)--(0.67,-0.4);
  \node[anchor=west,align=left] at (0.69,-0.4)
    {Unproductive project\\$\beta\capex$};

  \draw[->,line width=0.7pt] (0,0)--(1.08,0) node[right] {$\capex$};
  \draw[->,line width=0.7pt] (0,0)--(0,1.34) node[above] {Regret};
  \node[below left] at (0,0) {$0$};
  \draw (1,0.015)--(1,-0.015) node[below=3pt] {$K$};
  \draw (0.006,\capFigAlpha)--(-0.006,\capFigAlpha)
    node[left=3pt] {$\alpha K$};
  \draw (0.006,\capFigBeta)--(-0.006,\capFigBeta)
    node[left=3pt] {$\beta K$};

    \draw[cap guide] (0, \capFigBeta) -- (1, \capFigBeta);

    \draw[cap guide] (1, 0) -- (1, \capFigBeta);

  \draw[cap productive] (0,\capFigAlpha)--(1,0);
  \draw[cap unprofitable] (0,0)--(1,\capFigBeta);
  \draw[pennred, opacity = 0.5, line width = 3 pt] (0,\capFigAlpha)
    --(\capFigOpt,\capFigRegret)--(1,\capFigBeta);
  \node[anchor=south,align=center] at (0.78,1.05) {Worst-case regret};
  \draw[->,black!65,line width=0.6pt]
    (0.84,1.03)--(0.88,{0.88*\capFigBeta});

  \draw[cap guide] (0,\capFigRegret)--(\capFigOpt,\capFigRegret)
    --(\capFigOpt,0);
  \node[left=4pt] at (0,\capFigRegret) {$\regret_0^*$};
  \fill (\capFigOpt,\capFigRegret) circle[radius=2.5pt];
  \draw (\capFigOpt,0.015)--(\capFigOpt,-0.015);
  \node[below=4pt,align=center] at (\capFigOpt,0)
    {$\capex_0=\dfrac{\alpha}{\alpha+\beta}K$};
\end{tikzpicture}
\caption{Regret under a spending cap.
{\footnotesize A productive project generates underinvestment regret $\alpha(K-\capex)$, while an unproductive project generates overinvestment regret $\beta\capex$. The bold upper envelope is the worst-case regret of $([0,\capex],0)$, minimized at $\capex_0$, where the two losses coincide.}}
\label{fig:cap-regret-tradeoff}
\end{figure}

Proposition \ref{prop_delegation} implies that the optimal policy can be implemented by a simple spending rule: headquarters sets a ceiling, and the informed manager chooses how much to invest below it. This form resembles the initial spending limit in \citet{HarrisRaviv1996}, but in their model, a request above the initial limit may trigger an audit, after which additional funding can be granted. By contrast, our cap is a hard constraint: headquarters fixes a spending cap for the relevant planning period, and the manager cannot obtain additional capital. This cap is set to minimize the largest possible regret when headquarters lacks a prior. 

Having a hard spending limit is not uncommon in practice. \citet[p.~2032]{Graham2022} observes that corporate budgets are ``difficult to change during the budget year.'' \citet{DecaireSosyura2023} provide more direct evidence: managers who exhaust their budgets stop spending until the next budget reset, regardless of the investment opportunities that subsequently arise. 

We conclude this subsection by pointing out that the spending limit is not necessarily the unique least-regret capital rationing. For example, the binary menu \(\{0,\capex_0\}\) attains the same worst-case regret when $B \geq 1$. Nevertheless, the interval menu is practically natural and simple for implementation. 

\subsection{Comparative statics}
This subsection studies how the optimal spending limit and the associated regret vary with the primitives.

\begin{corollary}\label{cor:no-compensation-statics}
The optimal spending limit satisfies
\[
\frac{\partial\capex_0}{\partial A}>0,
\qquad
\frac{\partial\capex_0}{\partial K}>0,
\qquad
\frac{\partial\capex_0}{\partial r}<0,
\qquad
\frac{\partial\capex_0}{\partial B}
\begin{cases}
<0,& B<1,\\
=0,& B>1.
\end{cases}
\]
The minimized regret satisfies
\[
\frac{\partial\regret_0^*}{\partial A}>0,
\qquad
\frac{\partial\regret_0^*}{\partial K}>0,
\qquad
\frac{\partial\regret_0^*}{\partial r}<0,
\qquad
\frac{\partial\regret_0^*}{\partial B}
\begin{cases}
>0,& B<1,\\
=0,& B>1.
\end{cases}
\]
The spending limit and minimized regret are continuous at $B=1$, and $\capex_0/K$ is independent of $K$.
\end{corollary}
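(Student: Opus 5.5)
The plan is to work directly from the closed forms in Proposition \ref{prop_delegation}, writing $\capex_0=\frac{\alpha}{\alpha+\beta}K$ and $\regret_0^*=\frac{\alpha\beta}{\alpha+\beta}K$. Here $\alpha=\hat A-1=\frac{A}{1+r}-1>0$ depends only on $(A,r)$, and $\beta=\min\{B,1\}$ depends only on $B$. Since $\alpha>0$ and $\beta>0$, both expressions are smooth in $(A,r,K)$, and smooth in $B$ away from $B=1$. Every sign therefore reduces to the chain rule through $\alpha$ or $\beta$, combined with the fact that $\alpha>0$ and $\beta>0$.

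First I will record the primitive derivatives: $\partial\alpha/\partial A=\frac{1}{1+r}>0$ and $\partial\alpha/\partial r=-\frac{A}{(1+r)^2}<0$. Also, $\partial\beta/\partial B=1$ for $B<1$ and $0$ for $B>1$. Next I will differentiate the ratio and the product in $\alpha$:
\[
\frac{\partial}{\partial\alpha}\frac{\alpha}{\alpha+\beta}=\frac{\beta}{(\alpha+\beta)^2}>0,\qquad
\frac{\partial}{\partial\alpha}\frac{\alpha\beta}{\alpha+\beta}=\frac{\beta^2}{(\alpha+\beta)^2}>0.
\]
The analogous derivatives in $\beta$ are $-\alpha/(\alpha+\beta)^2<0$ and $\alpha^2/(\alpha+\beta)^2>0$.

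Combining these gives the signs for $A$ (positive), for $r$ (negative), and for $B$. For $B$, the cap decreases and the regret increases when $B<1$, and both derivatives vanish when $B>1$ because $\beta\equiv 1$ there. For $K$, both quantities are linear in $K$ with positive coefficients $\alpha/(\alpha+\beta)$ and $\alpha\beta/(\alpha+\beta)$, so both derivatives are strictly positive. The same linearity shows that $\capex_0/K=\alpha/(\alpha+\beta)$ does not involve $K$.

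For continuity at $B=1$, I will note that $\beta=\min\{B,1\}$ is continuous in $B$, and that both $\capex_0$ and $\regret_0^*$ are continuous functions of $\beta$ on $\beta>0$. Hence both are continuous at $B=1$, even though the one-sided $B$-derivatives differ there, which is why the statement omits $B=1$. No step is a genuine obstacle. The only point needing care is that the statement concerns $\regret_0^*$ as the minimized value. I will therefore use the symmetric product form $\frac{\alpha\beta}{\alpha+\beta}K$ rather than differentiating one branch of the max with the cap held fixed. This avoids any envelope-theorem subtlety, since the closed form already incorporates the optimal cap.
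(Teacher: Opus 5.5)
Your proposal is correct and takes the route the paper intends. The paper gives no separate appendix proof of this corollary, which follows by differentiating the closed forms $\capex_0=\frac{\alpha}{\alpha+\beta}K$ and $\regret_0^*=\frac{\alpha\beta}{\alpha+\beta}K$ from Proposition~\ref{prop_delegation}, with $A$ and $r$ entering only through $\alpha$ and $B$ only through $\beta=\min\{B,1\}$; its analogous proof for the compensated cap handles $r$ via $\partial_r=-\frac{A}{1+r}\partial_A$, which is the same as your chain rule through $\alpha$.
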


The effects of productivity bound $A$ and private-benefit bound $B$ are fairly straightforward. If the investment is potentially more productive, a restrictive policy can be more costly. Therefore, the spending limit gets higher. By contrast, higher potential private benefits make a permissive policy more costly, so the headquarters should reduce the spending limit. Once $B$ reaches one, however, the worst overinvestment case already involves a project with no output. Further increases in private benefits do not worsen this case, so the optimal limit no longer changes with $B$. A higher cost of capital $r$ lowers the present value of future output. Restricting investment is then less costly, so headquarters sets a lower limit.

The results on regret are related. Regret increases with $A$ because restricting a productive project becomes more costly. It also increases with $B$ until $B$ reaches one because permitting investment becomes more costly. As $B$ approaches zero, the manager becomes aligned with headquarters, the spending limit approaches $K$, and regret vanishes. As $A$ approaches $1+r$ from above, little productive surplus can be lost, so the spending limit approaches zero.

These results provide several empirical predictions. Divisions with greater productive upside should receive higher spending limits, while those with greater potential private benefits should receive lower limits. Spending limits should also be lower when the cost of capital is higher. Note that these predictions concern permitted spending rather than realized investment, which also depends on the manager's private information.

\section{Least Regret Contract Design}\label{sec:joint}

\subsection{The Role of Compensation}\label{sec:role-compensation}

We now bring compensation back into the problem. The central question is when and how compensation can improve on a spending cap. An increasing payment schedule encourages investment and reinforces the manager's private benefits. On the other hand, a declining schedule rewards lower investment and counteracts the manager’s incentives to expand, justifying a higher spending limit.  
Meanwhile, compensation is costly and may also discourage investment in a productive project.

Under a decreasing compensation schedule, the same investment may reflect underinvestment in a productive project or overinvestment in an unproductive one. In the first case, the manager stops investing because further investment reduces her compensation. In the second, she invests because her private benefits outweigh the reduction in compensation. Because headquarters cannot distinguish these cases from the investment choice alone, the contract must limit regret in both.

Let us see this problem more formally. Consider an interval menu and a differentiable, decreasing, concave payment schedule, and suppose the manager chooses an interior investment\footnote{
That is, $k_A\in \text{int}\ \menu$ is not driven by capital rationing.
}
$k_A$ that satisfies
\begin{equation}
    \max\{0,B-1\}\leq-w'(k_A)\leq\alpha
\qquad\text{and}\qquad
-w'(k_A)<B.\label{slope-restriction}
\end{equation}
We show that this same investment can result from the two projects below. 

\paragraph{A productive project.} Consider first a productive project for which informed headquarters prefers full investment. Let
\[
q_H(z)=(1+r)(1-w'(k_A))z,
\qquad b_H(z)=0.
\]
The manager's payoff from investing $z$ is
\[
\frac{q_H(z)}{1+r}-z+b_H(z)+w(z)
=\bigl(1-w'(k_A)\bigr)z-z+w(z)
=-w'(k_A)z+w(z).
\]
Thus, the manager chooses $k_A$.\footnote{At $z=k_A$, the derivative of the manager's payoff is $-w'(k_A)+w'(k_A)=0$. Because the payoff is concave, $k_A$ is a best response.} Headquarters, however, would invest $K$. Its regret is therefore
\[
w(k_A)-w'(k_A)(K-k_A).
\]

\paragraph{An unproductive project.} Consider next an unproductive project for which informed headquarters prefers no investment. Let
\[
q_L(z)=(1+r)(1-B-w'(k_A))z,
\qquad b_L(z)=Bz.
\]
The manager's payoff from investing $z$ is
\[
\frac{q_L(z)}{1+r}-z+b_L(z)+w(z)
=\bigl(1-B-w'(k_A)\bigr)z-z+Bz+w(z)
=-w'(k_A)z+w(z).
\]
Thus, the manager again chooses $k_A$. Headquarters' operating profit is $-(B+w'(k_A))z<0$, so it would invest zero. Its regret is therefore
\[
w(k_A)+(B+w'(k_A))k_A.
\]

Combining the two cases, the same investment $k_A$ produces the following underinvestment and overinvestment regrets:
\begin{equation}\label{eq:paired-regrets}
	\underbrace{w(k_A)-w'(k_A)(K-k_A)}_{\text{underinvestment}}
	\quad\text{and}\quad
	\underbrace{w(k_A)+(B+w'(k_A))k_A}_{\text{overinvestment}}.
\end{equation}
The payment schedule must control both.

\subsection{Optimal policy}

We first ask when compensation can improve on the spending cap in Proposition \ref{prop_delegation}.

\paragraph{When compensation cannot help.} We first show that compensation cannot help when $B\geq\min\{\hat A,2\}$, in which case the optimal policy remains the spending cap from Section \ref{sec:control}.
Specifically, there are two cases in which compensation may be ineffective. The first is when $B\geq\hat A$. Compare a productive project with $q(k)=Ak$ and $b(k)=0$ with an unproductive project with $q(k)=0$ and $b(k)=\hat A k$. In both cases, the manager receives $\alpha k+w(k)$ and therefore chooses the same investment $k_A$, although informed headquarters would choose $K$ for the productive project and zero for the unproductive one. The two regrets are $\alpha(K-k_A)+w(k_A)$ and $k_A+w(k_A)$. Note that the term $w(k_A)$ enters both in a positive manner, so their maximum cannot fall below $\frac{\alpha}{\alpha + 1}K$, which is already attained without compensation.

The second case occurs for $B\geq2$. Consider an unproductive project with $q(k)=0$ and $b(k)=Bk$. Suppose the largest permitted investment is $\capex$ and compensation induces $k_A<\capex$. For the manager to prefer $k_A$ to $\capex$, we have
\[
(B - 1)k_A+w(k_A)\geq(B-1)\capex+w(\capex)
\quad\Rightarrow\quad
w(k_A)\geq(B-1)(\capex-k_A),
\]
where we use $w(\capex) \geq 0$ because of limited liability.
Headquarters' regret is therefore at least
\[
k_A+w(k_A)\geq \capex + \underbrace{(B-2)}_{\geq  0}(\capex - k_A)\geq\capex.
\]
Thus, persuading a manager with strong private benefits to invest below the cap requires compensation at least as large as the resulting reduction in the firm's overinvestment loss. For any spending cap $\capex$, worst-case regret is therefore at least $\max\{\capex,\alpha(K-\capex)\}$. Minimizing this bound over $\capex$ gives $\frac{\alpha}{\alpha + 1}K$, which is attained by the optimal spending cap without compensation.

\paragraph{Optimal payment schedule when compensation is useful.} For $B<\min\{\hat A,2\}$, productive investment and empire building are no longer completely indistinguishable. Compensation can reduce worst-case regret from investment choices within the permitted range. The cap limits the manager’s maximum investment but may prevent profitable expansion beyond it.

As discussed above, a useful compensation schedule should decline with investment. Such a schedule allows a manager with a weak project to retain more compensation by investing less, while a manager with a productive project may forgo compensation to invest more. Our first result is that compensation should equal zero at the cap. Any positive payment at the cap could be subtracted from every payment without changing the manager’s investment choice, while lowering headquarters’ regret.

Three sources of worst-case regret arise from any contract: underinvestment within the permitted range, overinvestment within that range, and profitable investment excluded by the cap. The two projects in Subsection \ref{sec:role-compensation} give the first two losses in expressions \eqref{eq:paired-regrets}, while a productive project that would invest beyond the cap gives the third. 

To illustrate the construction, first consider a decreasing, concave
payment schedule that satisfies the slope restrictions \eqref{slope-restriction} throughout the permitted investment range. Each investment can then arise under either project,
generating the two regrets described above. Together with the loss from
profitable investment excluded by the cap, these examples imply
\[
\begin{split}
	\wcr([0,\capex],w)\geq\max\Bigl\{&\max_{k\in[0,\capex]}\bigl[w(k)-w'(k)(K-k)\bigr],\\
	&\max_{k\in[0,\capex]}\bigl[w(k)+(B+w'(k))k\bigr],\\
	&\alpha(K-\capex)\Bigr\}.
\end{split}
\]
This illustrative case highlights the forces that guide the construction.
A steeper payment schedule offsets private benefits but may discourage
productive investment, while a tighter cap limits excessive spending
but may exclude profitable expansion. Headquarters seeks to balance
these forces. The argument below uses this balance to motivate a
candidate contract; Appendix \ref{app:prop2} establishes optimality over the full
contract class.

Within this illustrative class, these three bounds cover every admissible project and are attained by the corresponding linear projects.
Consider first a managerial choice strictly below the spending cap. 
Managerial optimality and nondecreasing private benefits imply that the marginal operating return cannot exceed the compensation forgone, \(-w'(k)\).
If headquarters prefers more investment, diminishing returns therefore bound the remaining productive surplus by $-w'(k)(K-k)$.

If headquarters instead prefers less investment, the manager's willingness to invest up to $k$ implies that the marginal operating loss cannot exceed $B+w'(k)$. Diminishing returns then bound the total overinvestment loss by $(B+w'(k))k$. This bound also applies when the manager chooses the spending cap. 
The two linear projects in Subsection \ref{sec:role-compensation} attain these underinvestment and overinvestment bounds, respectively.

When the spending cap binds, the manager may still have a positive marginal incentive to expand but is prevented from doing so. If headquarters also prefers more investment, the bound on marginal productivity limits the forgone operating profit to $\alpha(K-\capex)$. Because compensation is zero at the cap, this is also a bound on regret. The project with $q(k)=Ak$ and $b(k)=0$ attains this loss: the manager invests up to the cap, while informed headquarters would invest the entire budget.

We can now construct the optimal policy. For a fixed cap, choose $w$ to keep the worst cases of the two within-cap losses as low as possible.
Fix a candidate worst-case value $\regret$. 
The examples suggest constructing a candidate schedule by keeping
the relevant regret equal to $\regret$ over different parts of the
permitted range.
Where the underinvestment bound binds, $w^*$ satisfies
\begin{equation*}\label{eq:equal-under}
	w^*(k)-w^{*\prime}(k)(K-k)\equiv\regret.
\end{equation*}
Where the overinvestment bound binds, it satisfies
\begin{equation*}\label{eq:equal-over}
	w^*(k)+(B+w^{*\prime}(k))k\equiv\regret.
\end{equation*}
Together with $w^*(\capex)=0$, these equations determine the payment schedule for a given cap. Headquarters then chooses the cap so that the regret from excluded productive investment, $\alpha(K-\capex)$, equals the worst within-cap regret. The resulting schedule takes one of two forms: either the underinvestment bound determines compensation throughout the permitted range, or it determines compensation below a cutoff $t$ and the overinvestment bound determines compensation above it. This construction yields the paper's main result.

Let $\regret^*:=\min_{\menu,w}\wcr(\menu,w)$ denote the minimum worst-case regret over all contracts.

\begin{proposition}\label{prop_contract_k}
	An optimal contract has menu $\menu^*=[0,\capex^*]$ and takes one of the following forms.
	\begin{enumerate}[label=(\roman*)]
		\item If $B\geq\min\{\hat A,2\}$, then $\capex^*=\frac{\alpha}{\alpha + 1}K$, $w^*\equiv0$, and $\regret^*=\frac{\alpha}{\alpha + 1}K$.
		
		\item If $0<\alpha<\frac{B}{3}$ and $B<\min\{\hat A,2\}$, then $\capex^*=\frac{\alpha}{B}K$, $\regret^*=\alpha K(1-\frac{\alpha}{B})$, and $w^*(k)=\regret^*\, \frac{\capex^*-k}{K-k}$ for $k\in[0,\capex^*]$.
		
		\item If $\alpha\geq \frac{B}{3}$ and $B<\min\{\hat A,2\}$, let
		\begin{equation}\label{eq:konly-parameters}
			t=\begin{cases}\frac{K}{3},&0<B\leq\frac{3}{2},\\\frac{B-1}{B}K,&\frac{3}{2}<B<2.\end{cases}
		\end{equation}
		Define $a=\frac{B}{2}t^2(1-\frac{2t}{K})$ and $d=\frac{Bt(K-t)^2}{K}$. The optimal cap is the unique $\capex^*\in[t,K)$ satisfying
		\begin{equation}\label{eq:konly-crossing}
			\alpha(K-\capex^*)=\frac B2\capex^*+\frac a{\capex^*}.
		\end{equation}
		Equivalently, $\capex^*=[\alpha K+\sqrt{\alpha^2K^2-2(2\alpha+B)a}]/(2\alpha+B)$. Moreover, $\regret^*=\alpha(K-\capex^*)$ and
		\begin{equation}\label{eq:konly-contract}
			w^*(k)=\begin{cases}
				\regret^*-\dfrac d{K-k},&0\leq k\leq t,\\[5pt]
				\regret^*-\dfrac B2k-\dfrac ak,&t\leq k\leq\capex^*.
			\end{cases}
		\end{equation}
	\end{enumerate}
	In all cases, $w^*(\capex^*)=0$. In the two compensation regions, $w^*$ is positive below the cap, strictly decreasing, and strictly concave on the permitted interval.
\end{proposition}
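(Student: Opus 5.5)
The proof has two halves: an upper bound showing each candidate contract in Proposition \ref{prop_contract_k} attains the stated $\regret^*$, and a matching lower bound over all contracts $(\menu,w)$.

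\textbf{Case (i).} This case is essentially already done in Subsection \ref{sec:role-compensation}. The two indistinguishability arguments give a lower bound for any contract with maximal element $\capex$:
\begin{itemize}
\item when $B\geq\hat A$, the lower bound is $\max\{\alpha(K-k_A)+w(k_A),\,k_A+w(k_A)\}$;
\item when $B\geq2$, the lower bound is $\max\{\capex,\alpha(K-\capex)\}$.
\end{itemize}
Either way the bound is at least $\frac{\alpha}{\alpha+1}K$. For the first bound I will check that the productive pair $q=Ak$, $b=0$ is admissible and that adverse tie-breaking lets the two linear types select the same $k_A$. Attainment follows from Proposition \ref{prop_delegation} with $\beta=1$.

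\textbf{Upper bound for (ii) and (iii).} Fix the candidate $([0,\capex^*],w^*)$. First verify directly that $w^*(\capex^*)=0$ and that $w^*$ is positive, strictly decreasing and strictly concave:
\begin{itemize}
\item in (ii), $w^*=\regret^*(\capex^*-k)/(K-k)$, which is explicit;
\item in (iii), check the two pieces and that value and slope match at $t$; the constants $a$ and $d$ are chosen for exactly this.
\end{itemize}
Then bound regret for an arbitrary admissible $(q,b)$ with manager choice $k_A$ and informed choice $k_P$, following the within-cap argument sketched before the proposition.
\begin{itemize}
\item If $k_P>k_A$ and $k_A<\capex^*$: managerial optimality against deviations to the right, together with $b$ nondecreasing and $q$ concave, gives $\pi'(k_A^+)\leq -w^{*\prime}(k_A)$. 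Concavity of $\pi$ then gives regret at most $w^*(k_A)-w^{*\prime}(k_A)(K-k_A)$.
\item If $k_A=\capex^*$ and $k_P>k_A$: regret is at most $\alpha(K-\capex^*)$.
\item If $k_P<k_A$: comparing with deviations to the left, the marginal operating loss is at most $B+w^{*\prime}(k_A)$. Concavity of $q$ and linear bounding of $b$ give regret at most $w^*(k_A)+(B+w^{*\prime}(k_A))k_A$. I need to truncate the loss at $k_A$, since the loss can be at most $k_A$, which requires $B+w'\geq1$ or else a $\min$ with 1. This is where the ranges $B<2$ and $t=(B-1)K/B$ for $B>3/2$ enter.
\end{itemize}
Finally, check that on $[0,t]$ the underinvestment expression equals $\regret^*$ and the overinvestment expression is at most $\regret^*$, and the reverse on $[t,\capex^*]$. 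These reduce to polynomial inequalities in $k$; in particular $w^*(k)+(B+w^{*\prime})k=\regret^*$ is the defining ODE on the second piece. The case split $\alpha\lessgtr B/3$ decides whether $\capex^*\leq t$, in which case only the underinvestment ODE is used, as in (ii).

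\textbf{Lower bound — the main obstacle.} For an arbitrary closed $\menu$ and upper-semicontinuous $w$, I must show $\wcr\geq\regret^*$; the heuristic in the text assumed differentiability and the slope restrictions \eqref{slope-restriction}.

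I will first reduce to interval menus with $w(\capex)=0$:
\begin{itemize}
\item subtracting $\min$-type constants lowers regret without changing choices;
\item filling gaps only adds options, but I will avoid needing this by working with the manager's indirect utility $U(s):=\max_{k\in\menu}\{sk+w(k)\}$, which is convex in the constant marginal payoff $s$.
\end{itemize}
Using only linear test projects, a manager with constant marginal payoff $s\in[-1,\alpha]$ can arise both from a productive type ($q$ linear with slope $(1+r)(1+s)$, $b=0$, if $s\geq0$) and from an unproductive type (slope $(1+r)(1+s-B)$ with $b=Bk$, if $s\geq B-1$). This yields pointwise lower bounds in terms of $U$, e.g. $w(k_s)+s(K-k_s)$ and $w(k_s)+(B-s)k_s$ with $k_s\in\partial U(s)$, and similarly on choices in the interval $[0,\capex]$.

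Writing $w(k_s)=U(s)-sk_s$ turns these into differential inequalities on the convex function $U$. Integrating them from the boundary condition at $s=0$ yields a Grönwall/comparison argument showing that $U$ cannot lie below the candidate's $U^*$ without some bound exceeding $\regret^*$; the candidate saturates them with equality. Combined with the cap bound $\alpha(K-\capex)$, minimizing over $\capex$ gives \eqref{eq:konly-crossing} and the value $\regret^*$.

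I expect making this comparison rigorous for nonsmooth $U$, using subgradients and the kink at $t$, to be the hardest step. I would first try working with one-sided derivatives of $U$.
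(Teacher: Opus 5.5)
Your case (i) and your upper-bound verification for (ii)--(iii) follow the paper's Steps 1 and 4 essentially line by line: one-sided optimality gives $w^*(k)+s(k)(K-k)$ and $w^*(k)+\min\{1,B-s(k)\}k$, plus $\alpha(K-\capex^*)$ when the cap binds. The gap is the lower bound over arbitrary contracts, which is the real content of (ii) and (iii). The mechanism you describe for it would not close.

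Integrating ``from the boundary condition at $s=0$'' yields nothing. $U(0)=\max_{\menu}w$ is constrained only by $U(0)\le R$. The productive inequality $U(s)+sK-2sU'_-(s)\le R$ says only that $(U(s)+sK-R)/\sqrt{s}$ is increasing, and this quantity tends to $-\infty$ as $s\downarrow0$ whenever $U(0)<R$. So no restriction propagates upward from $0$.

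The information that pins down $\regret^*$ sits in two places your plan never uses:
(a) a slope at which the productive and unproductive linear types coexist and therefore share a best response, namely $s=\alpha$ in case (ii) and $s_0=\max\{B/3,B-1\}$ in case (iii);
(b) limited liability $w\ge0$, used at the chosen investment in case (ii) and at the cap in case (iii), i.e.\ $U(s)\ge s\capex$.
Without (b), your within-cap bound does not depend on $\capex$. Combining it with $\alpha(K-\capex)$ then cannot produce the crossing condition for $\capex^*$. Nor does anything in your plan penalize contracts with $\capex>\capex^*$, which are exactly those whose cap loss lies below $\regret^*$. Relatedly, you cannot reach $w(\capex)=0$ by subtracting a constant without breaking limited liability, but only $w(\capex)\ge0$ is ever needed.

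The paper's lower bound is short. In case (ii), $(Ak,0)$ and $((1+r)(\hat A-B)k,Bk)$ both give the manager the objective $\alpha k+w(k)$. A common best response $z$ therefore has regret at least $\max\{\alpha(K-z),(B-\alpha)z\}+w(z)\ge\alpha K(1-\alpha/B)$.

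In case (iii), for each $\capex\ge t$ the paper builds a reference schedule $w_{\capex}$. It uses the same $t,a,d$, with level $\rho_{\capex}=B\capex/2+a/\capex$ fixed by $w_{\capex}(\capex)=0$. Every point of $w_{\capex}$ is the best response of an admissible linear environment with regret exactly $\rho_{\capex}$. Take $x^*\in\arg\max_{k\in\menu}\{w(k)-w_{\capex}(k)\}$. The environment supporting $x^*$ still selects $x^*$ under $(\menu,w)$, with regret $\rho_{\capex}+\Delta$ and $\Delta\ge w(\capex)\ge0$. This needs no smoothness, no interval menu, and no handling of the kink.

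Your indirect-utility framework can be repaired along the same lines:
\begin{enumerate}
\item The two types at $s_0$ give $U(s_0)\le R-s_0(K-2t)$.
\item On $[s_0,B/2)$, the unproductive inequality $U(s)+(B-2s)U'_+(s)\le R$ makes $(R-U(s))/\sqrt{B-2s}$ nondecreasing.
\item Combined with $U(s)\ge s\capex$, this gives $R\ge s\capex+t\sqrt{B(K-2t)/K}\,\sqrt{B-2s}$.
\item The maximum over $s$ is attained at $s=B/2-a/\capex^2\ge s_0$ and equals exactly $\rho_{\capex}$.
\end{enumerate}
So productive types below $s_0$ are not needed at all. What your proposal is missing is this anchoring at the coexistence slope and at the cap, with upward propagation through the unproductive region. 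That is the key idea, not a technicality about subgradients.
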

Figure \ref{fig:optimal-compensation} illustrates a typical optimal
compensation schedule in Proposition \ref{prop_contract_k}. Compensation is highest at zero
investment and falls to zero at the spending cap. Because the
schedule is strictly concave, each additional unit invested
entails a larger reduction in compensation. The manager can
choose any investment up to the cap; investment above it is
not permitted.

\begin{figure}[htbp]
\centering
\begingroup
\definecolor{policyblue}{RGB}{0,89,153}
\begin{tikzpicture}[
  x=10.8cm,y=22cm,font=\small,
  axis/.style={black!80,line width=0.6pt,->,>=stealth},
  payment/.style={draw=policyblue,line width=1.4pt}
]
 
  \def\policyCap{0.64760301386069}
  \def\policyRegret{0.35239698613931}
  \def\policyJoin{0.33333333333333}
  \def\policyInitial{0.20424883799116}

  \fill[black!6] (\policyCap,0) rectangle (1,0.225);
  \draw[black!40,densely dashed]
    (\policyCap,0) -- (\policyCap,0.225);
  \node[align=center,text=black!65] at (0.825,0.135)
    {Investment\\not permitted};

  \draw[axis] (0,0) -- (1.045,0);
  \draw[axis] (0,0) -- (0,0.24)
    node[above right] {Compensation $w^*(k)$};
  \node[below left] at (0,0) {$0$};
  \node[below=27pt] at (0.5,0) {Investment $k$};

  \foreach \y/\lab in
    {0.05/0.05,0.10/0.10,0.15/0.15,0.20/0.20}{
    \draw[black!70] (0,\y) -- ([xshift=-2pt]0,\y)
      node[left=3pt] {$\lab$};
  }
  \draw (1,0) -- ([yshift=-2pt]1,0)
    node[below=3pt] {$K=1$};
  \node[below=4pt] at (\policyCap,0)
    {$\capex^*\approx 0.65$};

  \draw[payment]
    plot[domain=0:\policyJoin,samples=100,variable=\x]
      (\x,{\policyRegret-4/(27*(1-\x))})
    plot[domain=\policyJoin:\policyCap,samples=120,variable=\x]
      (\x,{\policyRegret-\x/2-1/(54*\x)});

  \fill[policyblue] (0,\policyInitial) circle[radius=2pt];
  \node[above right,text=policyblue] at (0,\policyInitial)
    {$w^*(0)\approx 0.20$};
  \fill[policyblue] (\policyCap,0) circle[radius=2pt];
\end{tikzpicture}
\endgroup
\caption{Optimal compensation as a function of investment.\newline
{\footnotesize Parameters: $\hat A=2$, $B=1$, and $K=1$. These parameters fall into Case (iii) of Proposition \ref{prop_contract_k}.}}
\label{fig:optimal-compensation}
\end{figure}

Proposition \ref{prop_contract_k} includes three cases. Case (i) considers situations in which potential private benefits are very high. Compensation is not useful because it cannot distinguish productive investment from empire building at a sufficiently low cost. The optimal contract therefore uses only a spending cap, with no compensation.

Case (ii) considers situations in which potential private benefits are moderate and potential productivity is relatively low. Compensation is useful because it can reduce investment in a weak project at a sufficiently low cost. The optimal contract therefore combines a declining compensation schedule with a relatively low spending cap. Compensation is chosen to control the regret from underinvestment throughout the permitted range, while the regret from overinvestment becomes equally important only at the cap. Figure \ref{fig:caseii-compensation-regret} illustrates this case.\footnote{
	For each target investment $k$, write $m(k)=\min\{-w'(k),\alpha\}$. The productive project $q(z)=(1+r)[1+m(k)]z$, $b(z)=[-w'(k)-m(k)]z$ is admissible for all three schedules in Figure \ref{fig:caseii-compensation-regret} and induces $k$, as does the unproductive project in the preceding construction. Its regret is $w(k)+m(k)(K-k)$. When $-w'(k)>\alpha$, positive private benefits sustain investment despite output already being at its upper bound.
} In panel (a), regret from the productive project equals $\regret^*$ throughout, while regret from the unproductive project reaches $\regret^*$ only at the cap. Panels (b) and (c) show that lowering or raising compensation increases the highest regret.

\begin{figure}[htbp]
	\centering
	\begin{tikzpicture}[x=21cm,y=22cm,font=\footnotesize]
	
		\def\caseTwoCap{0.2}
		\def\caseTwoRegret{0.16}
		\def\caseTwoD{0.128}
		\def\caseTwoAlpha{0.2}
		\def\caseTwoCrossLow{0.165374901231}
		\def\caseTwoSlopeLimit{0.160952921464}
		\tikzset{
			case two productive/.style={draw=pennblue,line width=1pt},
			case two unprofitable/.style={draw=orange!85!black,line width=1pt,dash pattern=on 4pt off 3pt},
			case two guide/.style={draw=black!40,densely dotted,line width=0.6pt},
			case two envelope/.style={draw=pennred,opacity=0.4,line width=3pt}
		}
	
		\draw[case two productive] (0.03,-0.082)--(0.048,-0.082);
		\node[anchor=west,align=left] at (0.053,-0.082)
		{Productive project\\$w(k)+\min\{-w'(k),\alpha\}(K-k)$};
		\draw[case two unprofitable] (0.377,-0.082)--(0.395,-0.082);
		\node[anchor=west,align=left] at (0.400,-0.082)
		{Unproductive project\\$w(k)+(B+w'(k))k$};
		\node[anchor=east] at (-0.007,\caseTwoRegret) {$\regret^*$};
		
		\foreach \caseTwoPanel/\caseTwoScale/\caseTwoShift in {0/1/0,1/0.9/5.2,2/1.1/10.4} {
			\begin{scope}[xshift=\caseTwoShift cm]
				\pgfmathsetmacro{\caseTwoScaledRegret}{\caseTwoScale*\caseTwoRegret}
				\pgfmathsetmacro{\caseTwoOrangeCap}{\caseTwoScaledRegret+\caseTwoCap-\caseTwoScale*\caseTwoD/((1-\caseTwoCap)^2)}
				\draw[->,line width=0.7pt] (0,0)--(0.215,0) node[right] {$k$};
				\draw[->,line width=0.7pt] (0,0)--(0,0.19) node[above] {Regret};
				\node[below left] at (0,0) {$0$};
				\draw[case two guide] (0,\caseTwoRegret)--(0.205,\caseTwoRegret);
				\draw (\caseTwoCap,0.002)--(\caseTwoCap,-0.002)
				node[below=4pt] {$\capex^*$};
				
				\ifcase\caseTwoPanel
				\draw[case two envelope] (0,\caseTwoRegret)--(\caseTwoCap,\caseTwoRegret);
				\or
				\draw[case two envelope] (0,\caseTwoScaledRegret)--(\caseTwoCrossLow,\caseTwoScaledRegret);
				\draw[case two envelope,smooth,samples=100,domain=\caseTwoCrossLow:\caseTwoCap]
				plot (\x,{\caseTwoScaledRegret+\x-\caseTwoScale*\caseTwoD/((1-\x)^2)});
				\or
				\draw[case two envelope] (0,\caseTwoScaledRegret)--(\caseTwoSlopeLimit,\caseTwoScaledRegret);
				\draw[case two envelope,smooth,samples=100,domain=\caseTwoSlopeLimit:\caseTwoCap]
				plot (\x,{\caseTwoScaledRegret-\caseTwoScale*\caseTwoD/(1-\x)+\caseTwoAlpha*(1-\x)});
				\fi
				
				\ifnum\caseTwoPanel=2
				\draw[case two productive] (0,\caseTwoScaledRegret)--(\caseTwoSlopeLimit,\caseTwoScaledRegret);
				\draw[case two productive,smooth,samples=100,domain=\caseTwoSlopeLimit:\caseTwoCap]
				plot (\x,{\caseTwoScaledRegret-\caseTwoScale*\caseTwoD/(1-\x)+\caseTwoAlpha*(1-\x)});
				\else
				\draw[case two productive] (0,\caseTwoScaledRegret)--(\caseTwoCap,\caseTwoScaledRegret);
				\fi
				\draw[case two unprofitable,smooth,samples=150,domain=0:\caseTwoCap]
				plot (\x,{\caseTwoScaledRegret+\x-\caseTwoScale*\caseTwoD/((1-\x)^2)});
				
				\ifcase\caseTwoPanel
				\node[anchor=south] at (0.1,0.217) {(a) $w=w^*$};
				\draw[case two guide] (\caseTwoCap,0)--(\caseTwoCap,\caseTwoRegret);
				\fill (\caseTwoCap,\caseTwoRegret) circle[radius=2pt];
				\or
				\node[anchor=south] at (0.1,0.217) {(b) $w=0.9w^*$};
				\draw[case two guide] (\caseTwoCrossLow,0)--(\caseTwoCrossLow,\caseTwoScaledRegret);
				\draw[case two guide] (\caseTwoCap,0)--(\caseTwoCap,\caseTwoOrangeCap);
				\fill (\caseTwoCrossLow,\caseTwoScaledRegret) circle[radius=2pt];
				\fill (\caseTwoCap,\caseTwoOrangeCap) circle[radius=2pt];
				\or
				\node[anchor=south] at (0.1,0.217) {(c) $w=1.1w^*$};
				\draw[case two guide] (\caseTwoCap,0)--(\caseTwoCap,\caseTwoRegret);
				\draw[case two guide] (\caseTwoSlopeLimit,0)--(\caseTwoSlopeLimit,\caseTwoScaledRegret);
				\draw (\caseTwoSlopeLimit,0.002)--(\caseTwoSlopeLimit,-0.002);
				\fill (\caseTwoSlopeLimit,\caseTwoScaledRegret) circle[radius=2pt];
				\fi
			\end{scope}
		}
	\end{tikzpicture}
	\caption{Polar regrets under optimal and nearby suboptimal compensation, Case (ii).
    \newline
		{\footnotesize Parameters: $\hat A=1.2$, $B=1$, $K=1$, and the same cap $\capex^*=0.2$.}}
	\label{fig:caseii-compensation-regret}
\end{figure}

Finally, Case (iii) considers situations in which potential private benefits are moderate and potential productivity is relatively high. A restrictive spending cap is now more costly. Unlike Case (ii), neither underinvestment nor overinvestment alone determines compensation throughout the permitted range. Instead, there is a cutoff investment $t$ at which the two regrets are equal. Compensation is chosen to control the regret from underinvestment below $t$ and the regret from overinvestment above $t$. The regret from productive investment excluded by the cap also equals $\regret^*$. Figure \ref{fig:compensation-regret-tradeoff} illustrates Case (iii). In panel (a), regret from the productive project equals $\regret^*$ below $t$, while regret from the unproductive project equals $\regret^*$ above $t$. Panels (b) and (c) show that uniformly lowering or raising compensation causes one of these losses to exceed $\regret^*$.

\begin{figure}[htbp]
	\centering
	\begin{tikzpicture}[x=6.5cm,y=9cm,font=\footnotesize]
		\def\compFigA{2}
		\def\compFigB{1}
		\def\compFigBudget{1}
		\pgfmathsetmacro{\compFigAlpha}{\compFigA-1}
		\pgfmathsetmacro{\compFigJoin}{\compFigBudget/3}
		\pgfmathsetmacro{\compFigLittleA}{\compFigB*\compFigJoin^2*(1-2*\compFigJoin/\compFigBudget)/2}
		\pgfmathsetmacro{\compFigD}{\compFigB*\compFigJoin*(\compFigBudget-\compFigJoin)^2/\compFigBudget}
		\pgfmathsetmacro{\compFigCap}{(\compFigAlpha*\compFigBudget+sqrt(\compFigAlpha^2*\compFigBudget^2-2*(2*\compFigAlpha+\compFigB)*\compFigLittleA))/(2*\compFigAlpha+\compFigB)}
		\pgfmathsetmacro{\compFigRegret}{\compFigAlpha*(\compFigBudget-\compFigCap)}
		\def\compFigCrossLow{0.218078018146}
		\def\compFigCrossHigh{0.448921785294}
		\tikzset{
			comp productive/.style={draw=pennblue,line width=1pt},
			comp unprofitable/.style={draw=orange!85!black,line width=1pt,dash pattern=on 4pt off 3pt},
			comp guide/.style={draw=black!40,densely dotted,line width=0.6pt},
			comp envelope/.style={draw=pennred,opacity=0.4,line width=3pt}
		}
		
		\draw[comp productive] (0.17,-0.2)--(0.23,-0.2);
		
		\node at (-0.05, 0.35) {$\regret^*$};
		\node[anchor=west,align=left] at (0.25,-0.2)
		{Productive project\\$w(k)-w'(k)(K-k)$};
		\draw[comp unprofitable] (1.22,-0.2)--(1.28,-0.2);
		\node[anchor=west,align=left] at (1.30,-0.2)
		{Unproductive project\\$w(k)+(B+w'(k))k$};
		
		\foreach \compFigPanel/\compFigScale/\compFigXShift in {0/1/0,1/0.9/5.2,2/1.1/10.4} {
			\begin{scope}[xshift=\compFigXShift cm]
				\pgfmathsetmacro{\compFigScaledRegret}{\compFigScale*\compFigRegret}
				\pgfmathsetmacro{\compFigCapRegret}{\compFigScaledRegret+(1-\compFigScale)*\compFigB*\compFigCap}
				
				\draw[->,line width=0.7pt] (0,0)--({\compFigCap+0.05},0)
				node[right] {$k$};
				\draw[->,line width=0.7pt] (0,0)--(0,0.43) node[above] {Regret};
				\node[below left] at (0,0) {$0$};
				
				\draw[comp guide] (0,\compFigRegret)--({\compFigCap+0.018},\compFigRegret);
				\draw[comp guide] (\compFigCap,0)--(\compFigCap,\compFigCapRegret);
				\draw (\compFigCap,0.005)--(\compFigCap,-0.005)
				node[below=4pt,font=\footnotesize] {$\capex^*$};
				
				\ifcase\compFigPanel
				\draw[comp envelope] (0,\compFigRegret)--(\compFigCap,\compFigRegret);
				\or
				\draw[comp envelope] (0,\compFigScaledRegret)--(\compFigCrossLow,\compFigScaledRegret);
				\draw[comp envelope,smooth,samples=100,domain=\compFigCrossLow:\compFigJoin]
				plot (\x,{\compFigScaledRegret+\compFigB*\x-\compFigScale*\compFigD*\compFigBudget/((\compFigBudget-\x)^2)});
				\draw[comp envelope]
				(\compFigJoin,{\compFigScaledRegret+(1-\compFigScale)*\compFigB*\compFigJoin})
				--(\compFigCap,\compFigCapRegret);
				\or
				\draw[comp envelope] (0,\compFigScaledRegret)--(\compFigJoin,\compFigScaledRegret);
				\draw[comp envelope,smooth,samples=100,domain=\compFigJoin:\compFigCrossHigh]
				plot (\x,{\compFigScale*(\compFigRegret+\compFigB*\compFigBudget/2-\compFigB*\x-\compFigLittleA*\compFigBudget/(\x^2))});
				\draw[comp envelope]
				(\compFigCrossHigh,{\compFigScaledRegret+(1-\compFigScale)*\compFigB*\compFigCrossHigh})
				--(\compFigCap,\compFigCapRegret);
				\fi
				
				\draw[comp productive] (0,\compFigScaledRegret)--(\compFigJoin,\compFigScaledRegret);
				\draw[comp productive,smooth,samples=120,domain=\compFigJoin:\compFigCap]
				plot (\x,{\compFigScale*(\compFigRegret+\compFigB*\compFigBudget/2-\compFigB*\x-\compFigLittleA*\compFigBudget/(\x^2))});
				\draw[comp unprofitable,smooth,samples=120,domain=0:\compFigJoin]
				plot (\x,{\compFigScaledRegret+\compFigB*\x-\compFigScale*\compFigD*\compFigBudget/((\compFigBudget-\x)^2)});
				\draw[comp unprofitable]
				(\compFigJoin,{\compFigScaledRegret+(1-\compFigScale)*\compFigB*\compFigJoin})
				--(\compFigCap,\compFigCapRegret);
				
				\ifcase\compFigPanel
				\node[anchor=south] at (0.324,0.49) {(a) $w=w^*$};
				\draw[comp guide] (\compFigJoin,0)--(\compFigJoin,\compFigRegret);
				\draw (\compFigJoin,0.005)--(\compFigJoin,-0.005)
				node[below=4pt,font=\footnotesize] {$t$};
				\fill (\compFigJoin,\compFigRegret) circle[radius=2pt];
				\or
				\node[anchor=south] at (0.324,0.49) {(b) $w=0.9w^*$};
				\draw[comp guide] (\compFigCrossLow,0)--(\compFigCrossLow,\compFigScaledRegret);
				\fill (\compFigCrossLow,\compFigScaledRegret) circle[radius=2pt];
				\fill (\compFigCap,\compFigCapRegret) circle[radius=2pt];
				\or
				\node[anchor=south] at (0.324,0.49) {(c) $w=1.1w^*$};
				\pgfmathsetmacro{\compFigCrossValue}{\compFigScaledRegret+(1-\compFigScale)*\compFigB*\compFigCrossHigh}
				
				\draw[comp guide] (\compFigCrossHigh,0)--(\compFigCrossHigh,\compFigCrossValue);
			
				\fi
			\end{scope}
		}
	\end{tikzpicture}
	\caption{Polar regrets under optimal and nearby suboptimal compensation, Case (iii). {\footnotesize Parameters: $\hat A=2$, $B=1$, $K=1$, and the same cap $\capex^*\approx 0.65$.}}
	\label{fig:compensation-regret-tradeoff}
\end{figure}

Before concluding this subsection, we compare the optimal contract with pure capital rationing in Section \ref{sec:control}. Let $\capex_0$ be the optimal cap without compensation and $\capex^*$ the optimal cap with compensation.

\begin{corollary}\label{cor:delegation}
	If $B<\min\{\hat A,2\}$, then
	\[
	\capex^*>\capex_0,
	\quad
	\regret^*<\regret_0^*.
	\]
	If $B\geq\min\{\hat A,2\}$, the two caps and regret values are identical.
\end{corollary}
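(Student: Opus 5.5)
The plan is to reduce both claims to a comparison of caps. In every case of Proposition \ref{prop_contract_k} the minimized regret equals the excluded-investment loss: $\regret^*=\alpha(K-\capex^*)$. In Case (ii) this holds because $\alpha K(1-\alpha/B)=\alpha(K-\alpha K/B)$, and in Case (iii) it is stated directly. By Proposition \ref{prop_delegation}, $\regret_0^*=\alpha(K-\capex_0)$. Since $\alpha>0$, it follows that $\regret^*<\regret_0^*$ if and only if $\capex^*>\capex_0$, so it suffices to prove the cap inequality.

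For the case $B\geq\min\{\hat A,2\}$, note that $\hat A>1$, so $B>1$, hence $\beta=1$. Proposition \ref{prop_delegation} then gives $\capex_0=\frac{\alpha}{\alpha+1}K$ and $\regret_0^*=\frac{\alpha}{\alpha+1}K$. These coincide with Case (i) of Proposition \ref{prop_contract_k}.

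Now assume $B<\min\{\hat A,2\}$. In Case (ii), the claim $\capex^*=\frac{\alpha}{B}K>\frac{\alpha}{\alpha+\beta}K$ is equivalent to $B<\alpha+\beta$. If $B\leq 1$, then $\beta=B$ and the inequality is trivial. If $1<B<2$, then $\beta=1$ and the inequality reads $B<\hat A$, which is assumed.

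In Case (iii) I will use the crossing characterization \eqref{eq:konly-crossing}. Let $g(k):=\alpha(K-k)-\frac B2 k-\frac ak$. On $[t,K)$ we have $g'(k)=-\alpha-\frac B2+\frac{a}{k^2}$, and $a/k^2\leq a/t^2=\frac B2(1-\frac{2t}{K})\leq \frac B2$, so $g$ is strictly decreasing there. If $\capex_0<t$, then $\capex^*\geq t>\capex_0$ and we are done. Otherwise $\capex_0\geq t$, and since $g$ is decreasing with $g(\capex^*)=0$, it suffices to show $g(\capex_0)>0$. Using $\alpha(K-\capex_0)=\beta\capex_0$, this condition becomes
\[
\Bigl(\beta-\tfrac B2\Bigr)\capex_0^2>a .
\]
I will check it in three subcases, using $\capex_0\geq t$ throughout.
\begin{itemize}
\item For $B\leq1$: here $\beta=B$, and the condition reduces to $\capex_0^2>t^2(1-2t/K)=\frac{t^2}{3}$, which holds.
\item For $1<B\leq\frac32$: here $\beta=1$ and $t=K/3$, so the condition is $(1-\frac B2)\capex_0^2>\frac{BK^2}{54}$. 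Since $(1-\frac B2)\frac{K^2}{9}\geq\frac{BK^2}{54}$ exactly when $B\leq\frac32$, the inequality holds strictly unless $B=\frac32$ and $\capex_0=K/3$. That boundary would require $\alpha=\frac12$, i.e.\ $\hat A=\frac32=B$, which is excluded by $B<\hat A$.
\item For $\frac32<B<2$: the identity $1-2t/K=(2-B)/B$ gives $a=(1-\frac B2)t^2$, so the condition becomes exactly $\capex_0>t$. This is $\frac{\alpha}{\alpha+1}>\frac{B-1}{B}$, which is equivalent to $\alpha>B-1$, i.e.\ $B<\hat A$.
\end{itemize}

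The main obstacle is this last part of Case (iii). The margin there is knife-edge: the inequality $g(\capex_0)>0$ degenerates to an equality precisely when $B=\hat A$. The proof must therefore track where strictness comes from, namely the hypothesis $B<\hat A$, and must also handle separately the possibility $\capex_0<t$, where the crossing equation does not apply directly.
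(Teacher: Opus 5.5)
Your proposal is correct and follows essentially the paper's argument: the same reduction via $\regret=\alpha(K-\capex)$, the same $\alpha+\beta>B$ check in case (ii), and in case (iii) the same comparison of $\frac B2+\frac{a}{k^2}\leq\frac B2+\frac{a}{t^2}$ against $\beta$, with strictness at the boundary coming from $B<\hat A$. The only difference is that the paper applies this bound at $\capex^*$, which is already known to satisfy $\capex^*\geq t$, and so obtains $\alpha(K-\capex^*)/\capex^*<\beta$ directly; this spares it your case split on whether $\capex_0\geq t$ and the monotonicity argument for $g$.
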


Corollary \ref{cor:delegation} implies that compensation complements managerial discretion. When compensation is useful, rewarding lower investment limits spending on poor projects and allows headquarters to set a higher cap, which preserves more of the upside from productive projects and lowers worst-case regret. Importantly, a higher cap expands permitted investment but does not increase investment for every project. A manager with a poor project may invest less in response to compensation, while one with a productive project may use the higher cap and receive little or no compensation. At the same time, limited liability explains why compensation does not eliminate the spending cap. Headquarters cannot charge the manager for investing more; it can only reward lower investment and withdraw that reward as investment increases. Without a cap, headquarters would have to offer large payments for restraint to offset the manager's private benefit from ever-higher investment. The optimal contract therefore combines a declining payment schedule with a spending cap.

\subsection{Comparative statics of the optimal contract}
\label{sec:comparative-statics}

We next study how the spending cap and compensation vary with the primitives. A change in $A$ or $B$ changes the range of productive returns or private benefits that headquarters considers possible, rather than the characteristics of a particular project or manager.

\begin{corollary}\label{cor:cap-statics}
	The optimal spending cap increases with $A$ and $K$ and decreases with $r$. It decreases with $B$ when compensation is useful and is independent of $B$ when compensation is not useful. 
\end{corollary}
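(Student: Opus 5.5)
Corollary \ref{cor:cap-statics} follows from direct differentiation of the closed forms in Proposition \ref{prop_contract_k}, case by case, using $\alpha=\hat A-1=A/(1+r)-1$. Since $\alpha$ increases in $A$ and decreases in $r$, and $r$ enters $\capex^*$ only through $\alpha$, it suffices to show that $\capex^*$ increases in $\alpha$, increases in $K$, and has the stated dependence on $B$.

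In case (i), $\capex^*=\frac{\alpha}{\alpha+1}K$. This is increasing in $\alpha$ and in $K$ and independent of $B$. In case (ii), $\capex^*=\frac{\alpha}{B}K$. This is increasing in $\alpha$ and $K$ and strictly decreasing in $B$.

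For case (iii), I would use the crossing equation \eqref{eq:konly-crossing}. Write $t=\tau(B)K$, where $\tau(B)=1/3$ or $(B-1)/B$. Then $a=\frac B2\tau^2(1-2\tau)K^2$, so $a$ scales as $K^2$, and $\capex^*/K$ solves an equation independent of $K$. Hence $\capex^*$ is proportional to $K$ and increasing in $K$.

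For $\alpha$ and $B$, define
\[
F(\capex;\alpha,B)=\alpha(K-\capex)-\tfrac B2\capex-\tfrac a{\capex}.
\]
At the unique root $\capex^*\in[t,K)$ we need $\partial F/\partial\capex<0$. I would check this from uniqueness: $F(t)\ge0$, $F(K)<0$, and $F$ is concave in $\capex$ because $-a/\capex$ is concave for $a\ge0$. Hence $F$ is decreasing at its larger root, which is the root given by the explicit formula with $+\sqrt{\cdot}$.

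With this sign, $\partial F/\partial\alpha=K-\capex^*>0$ gives $\partial\capex^*/\partial\alpha>0$. For $B$, it suffices to show $\partial F/\partial B<0$, which reduces to $\frac12\capex^*+\frac{1}{\capex^*}\frac{\partial a}{\partial B}>0$. When $B\le 3/2$, $a=BK^2/54$ and the claim is immediate. When $3/2<B<2$, $\tau=(B-1)/B$ and $a=\frac{(B-1)^2(2-B)}{2B^2}K^2$. Its derivative in $B$ may be negative, so I must show that $\frac12(\capex^*)^2$ dominates. I would bound $|\partial a/\partial B|$ by an explicit multiple of $K^2$ on $(3/2,2)$ and use $\capex^*\ge t=\frac{B-1}{B}K>K/3$. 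This comparison is the main obstacle. If the crude bound fails, I would instead differentiate the explicit root formula, or use the envelope characterization $\regret^*=\alpha(K-\capex^*)$ together with the fact that worst-case regret is nondecreasing in $B$ (the admissible set grows), which immediately gives $\capex^*$ weakly decreasing in $B$. Strictness then comes from $\partial F/\partial B\neq0$.

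Finally, I would check that the statics hold across the case boundaries. The cap is continuous at $\alpha=B/3$: at $t=K/3$, $\capex^*=\frac{\alpha}{B}K=K/3$ solves \eqref{eq:konly-crossing}. At $B=\min\{\hat A,2\}$ I would compare values to confirm that the cap jumps upward, consistent with Corollary \ref{cor:delegation}. A possible upward jump in $B$ would not contradict the claim that the cap decreases in $B$ within the useful region, and the statement is piecewise in $B$. For $A$ and $r$, I would verify continuity, or an upward jump, as $\alpha$ crosses $B/3$ and $\hat A$ crosses $B$. By Corollary \ref{cor:delegation}, the cap moves from $\capex_0$ to a larger $\capex^*$ when compensation becomes useful, so monotonicity in $\alpha$ survives these transitions.
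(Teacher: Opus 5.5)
Your handling of $A$, $K$ and $r$, of cases (i) and (ii), and of case (iii) for $B\le 3/2$ is correct and matches the paper's implicit-differentiation argument. The gap is the step you flag yourself: the sign of $\partial F/\partial B$ in case (iii) when $3/2<B<2$. In normalized form ($z=\capex^*/K$, $h=a/K^2$) you need $z^2/2+h_B>0$, where
\[
h_B=-\frac{(B-1)(B^2+B-4)}{2B^3}.
\]
A uniform bound on $|h_B|$ over $(3/2,2)$, combined with $z>1/3$, cannot work. As $B\to2$, $h_B\to-1/8$, while $z>1/3$ gives only $z^2/2>1/18$. The inequality is tight in this limit, so you need the pointwise bound $z\ge\tau=(B-1)/B$. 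Since $z^2/2+h_B$ increases in $z$,
\[
\frac{z^2}{2}+h_B\;\ge\;\frac{\tau^2}{2}+h_B=\frac{(B-1)(2-B)}{B^3}>0 .
\]
This is the paper's argument. It pairs this bound with the direct estimate $-\partial F/\partial\capex=\alpha+B/2-a/\capex^2\ge\alpha+\max\{B/3,B-1\}>0$ for $\capex\ge t$, which is simpler than your concavity detour.

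Your fallback is a genuinely nice alternative, but it is incomplete. Because $\regret^*=\alpha(K-\capex^*)$ in all three cases and $\envir$ grows with $B$, $\regret^*$ is weakly increasing in $B$. Hence $\capex^*$ is weakly decreasing in $B$ globally, including across all region boundaries, which the paper handles through continuity checks. However, this gives only weak monotonicity. Your claim that strictness follows from $\partial F/\partial B\neq0$ presupposes exactly the inequality above. So strict decrease, which is what separates "decreases with $B$" from "independent of $B$", remains unproved until you supply that computation.

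Separately, your expectation of an upward jump at $B=\min\{\hat A,2\}$ is wrong: the cap is continuous there. As $\hat A\downarrow B$, the compensated cap tends to $(B-1)K/B=\alpha K/\hat A$. As $B\uparrow2$, $h\to0$ and the cap equation gives $\alpha K/\hat A$. Corollary \ref{cor:delegation} is a pointwise strict inequality whose margin vanishes at the boundary, so it does not imply a jump. Your cross-boundary argument for $A$ is still valid without any jump. For $\hat A_1\le B<\hat A_2$,
\[
\capex^*(\hat A_1)=\capex_0(\hat A_1)<\capex_0(\hat A_2)<\capex^*(\hat A_2),
\]
which uses only that $\capex_0$ increases in $\alpha$.
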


The results in Corollary \ref{cor:cap-statics} are largely similar to those in Corollary \ref{cor:no-compensation-statics}. The main difference concerns potential private benefits $B$. Without compensation, the spending cap no longer changes with $B$ once $B$ reaches one. With compensation, a higher $B$ continues to lower the cap as long as compensation is useful because it also raises the cost of inducing lower investment. As a result, the cap becomes independent of $B$ only when compensation is no longer useful.

\begin{corollary}\label{cor:compensation-statics}
	When compensation is useful, holding investment fixed, higher $A$ raises compensation, while higher $r$ lowers it within each compensation region. Higher $B$ makes compensation decline more steeply with investment but has an ambiguous effect on its level. 
\end{corollary}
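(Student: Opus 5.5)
The plan is to read everything off the closed forms in Proposition~\ref{prop_contract_k}. In those formulas $A$ and $r$ enter only through $\alpha=\hat A-1=\frac{A}{1+r}-1$, which increases in $A$ and decreases in $r$. So for the level statements it suffices to show that, at each fixed $k$ in a given compensation region, $w^*(k)$ is nondecreasing in $\alpha$ (strictly wherever $w^*(k)>0$). By Corollary~\ref{cor:cap-statics} the cap rises with $A$ and falls with $r$, so a fixed $k$ below the cap stays below it after an increase in $\alpha$. In case (iii) the cutoff $t$ does not depend on $\alpha$, so "within each compensation region" is well defined. The slope statement concerns $-w^{*\prime}(k)$ and its dependence on $B$; the ambiguity claim will be settled by exhibiting parameters with opposite signs.

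\textbf{Case (ii).} Here
\[
w^*(k)=\alpha K\Bigl(1-\frac{\alpha}{B}\Bigr)\frac{\frac{\alpha}{B}K-k}{K-k}.
\]
The prefactor $\regret^*$ has $\alpha$-derivative $K(1-2\alpha/B)$, which is positive because $\alpha<B/3$. The factor $(\capex^*-k)/(K-k)$ increases in $\alpha$ and is positive for $k<\capex^*$. Hence $w^*(k)$ increases in $\alpha$.

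\textbf{Case (iii).} The constants $t$, $a$ and $d$ are independent of $\alpha$, so on both pieces $w^*(k)$ equals $\regret^*$ minus an $\alpha$-free term. It therefore suffices to show that $\regret^*$ increases in $\alpha$. From \eqref{eq:konly-crossing},
\[
\regret^*=g(\capex^*), \qquad g(c):=\frac B2 c+\frac a c .
\]
On $[t,K)$ we have
\[
g'(c)\ \ge\ \frac B2-\frac a{t^2}=\frac B2\cdot\frac{2t}{K}>0,
\]
and $\capex^*$ increases in $\alpha$ by Corollary~\ref{cor:cap-statics}. So $\regret^*$ rises with $\alpha$.

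\textbf{Slopes in $B$.} In case (ii),
\[
-w^{*\prime}(k)=\frac{\alpha K^2(1-\alpha/B)^2}{(K-k)^2},
\]
which increases in $B$. In case (iii), on $[0,t]$ we have $-w^{*\prime}(k)=d/(K-k)^2$.
\begin{itemize}
\item For $B\le 3/2$: $t=K/3$ is fixed and $d$ is linear in $B$.
\item For $3/2<B<2$: $d=(B-1)K^2/B^2$, whose derivative $(2-B)K^2/B^3$ is positive.
\end{itemize}
On $[t,\capex^*]$ we have $-w^{*\prime}(k)=\frac B2-\frac{a}{k^2}$.
\begin{itemize}
\item For $B\le 3/2$: $a=BK^2/54$, so $-w^{*\prime}(k)=B\bigl(\frac12-\frac{K^2}{54k^2}\bigr)$. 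The bracket is positive because $k\ge K/3$.
\item For $3/2<B<2$: $a=K^2(B-1)^2(2-B)/(2B^2)$. Here I would differentiate $\frac B2-\frac a{k^2}$ in $B$ and bound $\partial a/\partial B$ using $k\ge t$.
\end{itemize}

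The main obstacle is this last subcase, because both $a$ and the region boundary $t=(B-1)K/B$ move with $B$. A point $k$ with fixed value can switch pieces as $B$ changes. I would first verify that $w^*$ is $C^1$ at $t$ (the smooth-pasting built into the construction), so the slope comparison is unaffected when $k$ crosses $t$. I would then check
\[
\frac12-\frac{\partial a/\partial B}{k^2}>0 \quad\text{for } k\ge t,
\]
reducing it to a polynomial inequality in $B\in(3/2,2)$.

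\textbf{Ambiguity of the level.} In case (ii),
\[
w^*(0)=\frac{\alpha^2K(1-\alpha/B)}{B},
\]
which decreases in $B$ because $\alpha<B/3$. In case (iii), a higher $B$ lowers $\capex^*$, which raises $\regret^*=\alpha(K-\capex^*)$, and it also raises $d$; with suitable parameters the net effect on $w^*(0)=\regret^*-d/K$ can be made positive. I would exhibit a numerical instance of each sign, for example near the parameters of Figure~\ref{fig:optimal-compensation}, which establishes the ambiguity.
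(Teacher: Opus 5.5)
Your proposal is correct. For all the monotonicity claims it follows the paper's route: both read everything off the closed forms of Proposition~\ref{prop_contract_k}, and your $\partial_B\bigl(-w^{*\prime}(k)\bigr)$ on the second branch is exactly the paper's $J'(x)=\tfrac12-h_B/x^2$.

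The step you left open for $3/2<B<2$ closes. With $h_B=-(B-1)(B^2+B-4)/(2B^3)$, the expression is at least $\tfrac12$ when $h_B\le0$. Otherwise it is minimized at $x=\tau$, where it equals $(B^2-2)/(B(B-1))>0$.

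You differ from the paper in two places.

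\textbf{The $\alpha$-monotonicity of $\regret^*$ in case (iii).} You write $\regret^*=g(\capex^*)$ with $g$ increasing and combine this with the cap comparative static. The paper instead implicitly differentiates to obtain $\rho_A=(1-z)s_z/((1+r)D)$. Yours is slightly more economical and equally rigorous.

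\textbf{The ambiguity in $B$.} You pair the uniform fall of $w^*(0)$ in case (ii) with a numerical rise in case (iii). The rise does occur at the figure's parameters: with $\hat A=2$ and $K=1$, $w^*(0)\approx0.204$ at $B=1$ and $w^*(0)\approx0.213$ at $B=1.1$. This suffices for the statement as written, but you should actually carry out that computation rather than leave it as a promise.

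The paper proves more. Writing the normalized derivative as $v_B(x)=\eta-J(x)$ with $J$ strictly increasing, it shows that in case (iii) payments either fall at every investment or rise below a single crossing and fall above it. It also locates the switch at a productivity threshold $\hat A_c(B)$, roughly $1+0.5B$ for $B\le3/2$. Your example is therefore one instance of the high-productivity regime; you get the literal claim cheaply, while the paper's argument yields the single-crossing shape discussed in the text.
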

The results on compensation are fairly intuitive: a higher $A$ raises compensation, while a higher $r$ lowers it. Moreover, a higher $B$ makes compensation decline more steeply with investment because headquarters must counter stronger private benefits from project scale. Interestingly, the effect on the level of compensation is ambiguous: payments may fall at every investment level or rise at low levels of investment and fall at high levels of investment. 

These results have several testable implications. Business units with greater potential productive upside should receive higher spending caps and higher compensation schedules. Greater managerial private benefits should be associated with lower caps and compensation that is more sensitive to investment, but not necessarily with lower payments. A higher cost of capital should be associated with both lower spending caps and lower compensation. 

\subsection{Discussion and extensions}\label{sec:extensions}
Our analysis restricts compensation to depend on investment, as may be appropriate when output is difficult to verify or arrives too late to determine the manager's pay. If headquarters can instead condition compensation on both investment and realized output, it can make payments for restraint depend on whether the investment covers its cost. This relaxes the informational constraint in Section \ref{sec:role-compensation}: the same investment can produce different outputs in a productive project and one sustained by private benefits. Observing output nevertheless reveals only the return at the chosen investment, not the entire production function.

Appendix \ref{app:outcome-compensation} studies this possibility. The construction combines a spending cap with payments for restraint, offered only when realized operating profit is \textit{nonpositive}. 
Within the payment region, the manager forgoes compensation when she expands investment. 
At a positive investment \(k>0\), nonpositive operating profit, \(q(k)/(1+r)\leq k\), and diminishing returns rule out profitable expansion beyond that investment.
Headquarters can therefore reward restraint in this case without worrying too much about sacrificing profit from further expansion. 
On the other hand, when an investment is profitable, headquarters need not provide an additional incentive to invest: the manager already benefits from operating profit and enjoys private benefits from investment. A payment for restraint could instead discourage profitable expansion. The contract therefore withholds compensation in this case and relies on the spending cap to limit excessive investment.

Under a parameter restriction, contracts of this form can strictly improve on the optimal investment-only contract and approach a sharp lower bound on regret.\footnote{
For $0<B<2$ and $0<\alpha\leq \frac{B}{2-B}$, where $\alpha= \frac{A}{1+r}-1$, the infimum over the output-contingent contracts defined in the appendix is $\frac{\alpha B}{2\alpha + B}K$.  For $0<B<2$ outside this parameter region, whether the same bound is sharp remains open.
} The usefulness of compensation is thus no longer limited by the condition $B<\hat A$.  Output contingency expands the scope for compensation.
Nevertheless, its value is limited. Relative to the optimal investment-only contract, output contingency reduces worst-case regret by at most one ninth. When potential private benefits satisfy \(B\geq2\), it provides no improvement.
Costly auditing would additionally require headquarters to decide when the value of conditioning payments on output justifies the cost of verification, which narrows the benefit even further.

\section{Conclusion}

This paper studies capital budgeting when headquarters cannot assign a prior to either investment opportunities or managerial private benefits. Evaluating policies by the largest profit shortfall relative to informed investment yields a simple capital rationing policy: when compensation is unavailable, a spending cap is optimal even among procedures that can exclude particular investment levels. The cap balances losses from empire building against losses from restricting productive projects. It limits the manager's authority to expand while preserving her discretion over how much of the allocation to use.

Compensation can expand this discretion by giving the manager a reason to preserve capital. When potential private benefits are sufficiently limited, an optimal contract combines a higher cap with a decreasing, strictly concave payment schedule that reaches zero at the cap. Expanding the project requires the manager to forgo compensation at an increasing marginal rate. Paying for restraint thus supports greater investment authority while limiting the losses from excessive spending. Compensation ceases to improve the guarantee when potential private benefits are large enough to make productive investment and empire building indistinguishable, or to make restraint at least as costly as the investment loss it avoids. A spending cap alone is then optimal even though payments are available.

The analysis highlights the value of designing investment authority and compensation jointly. Greater potential productivity raises the cost of restricting investment and warrants a higher cap. When compensation remains useful, greater potential private benefits call for a lower cap and a steeper decline in payments, but need not raise the payment offered at a given investment. These implications concern headquarters' information about possible projects and managerial preferences. They suggest examining spending limits and the sensitivity of compensation to investment together in robust capital budgeting design. 

\bibliographystyle{ecta}
\bibliography{references}

\newpage
\appendix
\section{Proof of Lemma \ref{lem_direct}}\label{app:lemma}
Fix a regular mechanism $(M,\kappa,\omega)$, and let
\[
\mathcal O:=\{(\kappa(m),\omega(m)):m\in M\}.
\]
By admissibility, $\mathcal O$ is nonempty, compact, and contains $\mathbf{0}$. Define
\[
\menu:=\{k:(k,z)\in\mathcal O\text{ for some }z\},
\qquad
w(k):=\max\{z:(k,z)\in\mathcal O\},\quad k\in\menu.
\]
The menu is nonempty and compact, and each maximum exists. Moreover, $w$ is finite, nonnegative, and upper semicontinuous. To see the last assertion, take $k_n\to k$ in $\menu$ and a subsequence along which $w(k_n)$ converges to its limsup. Compactness of $\mathcal O$ gives
\[
\bigl(k,\limsup_n w(k_n)\bigr)\in\mathcal O,
\qquad
\limsup_n w(k_n)\leq w(k).
\]
Extending $w$ by zero outside the closed menu preserves upper semicontinuity because $w\geq0$.

For any environment, write $v(k):=\frac{q(k)}{1+r}-k+b(k)$. An outcome $(k,z)\in\mathcal O$ with $z<w(k)$ is strictly worse for the manager than $(k,w(k))$. Consequently,
\[
\operatorname*{arg\,max}_{(k,z)\in\mathcal O}\{v(k)+z\}
=
\left\{(k,w(k)):k\in
\operatorname*{arg\,max}_{x\in\menu}\{v(x)+w(x)\}\right\}.
\]
Choosing a message in $M$ is equivalent to choosing its outcome in $\mathcal O$, so this equality identifies all best-response outcomes under the mechanism and the contract. At any such outcome $(k,w(k))$, the payoff pair for headquarters and the manager is
\[
\bigl(\pi(k;q)-w(k),\ \pi(k;q)+b(k)+w(k)\bigr)
\]
under either representation. Hence the sets of both parties' best-response payoff pairs coincide in every environment, including all ties.

Conversely, consider a nonempty closed menu $\menu \ni 0$ and a finite, nonnegative, upper semicontinuous schedule $w$. Set
\[
M_w:=\{(k,z):k\in\menu,\ 0\leq z\leq w(k)\},
\qquad \kappa(k,z):=k,\qquad \omega(k,z):=z.
\]
Since $w$ is finite and upper semicontinuous on the compact menu, it is bounded above. Upper semicontinuity also makes $M_w$ closed. Thus $M_w$, which is the outcome set of this mechanism, is nonempty and compact, so the mechanism is regular. At any investment $k$, the manager strictly prefers $w(k)$ to any smaller available payment. The mechanism therefore induces exactly the contract's best-response investment--payment pairs and, by the displayed payoff formula, exactly the same payoff pairs for both parties. Finally, if $\omega\equiv0$, the first construction gives $w\equiv0$; if $w\equiv0$, the converse gives $\omega\equiv0$. This establishes the equivalence also when compensation is unavailable.
\hfill\qedsymbol

\section{Proof of Proposition \ref{prop_delegation}}\label{app:prop1}
Write $\alpha:=\hat A-1$ and $\beta:=\min\{B,1\}$. Fix any nonempty closed menu $\menu$, and let $\capex:=\max\menu$. In the environment $(q(k),b(k))=(Ak,0)$, the manager chooses $\capex$, whereas the first-best investment is $K$. Hence worst-case regret is at least $\alpha(K-\capex)$. For the opposite concern, take
\[
(q(k),b(k))=\bigl((1+r)(1-\beta)k,\beta k\bigr).
\]
This environment is admissible, and the manager is indifferent among all investments. Adverse tie-breaking selects $\capex$, yielding regret $\beta\capex$. Thus
\begin{equation}
\wcr(\menu,0)\geq
\max\{\alpha(K-\capex),\beta\capex\}.
\label{eq:delegation-lower-bound}
\end{equation}

We next show that the interval $[0,\capex]$ attains this bound. Fix an environment, a managerial best response $k\in[0,\capex]$, and a first-best investment $p\in[0,K]$. If $p>\capex$, comparison with the feasible choice $\capex$ and monotonicity of $b$ imply
\[
\pi(k)\geq\pi(\capex)+b(\capex)-b(k)\geq\pi(\capex).
\]
Since $\hat q:= \frac{q}{1+r}$ is $\hat A$-Lipschitz, regret is then at most
\[
\pi(p)-\pi(k)\leq\alpha(p-\capex)
\leq\alpha(K-\capex).
\]
If $p\leq\capex$, the manager's preference for $k$ gives
\[
\pi(p)-\pi(k)\leq b(k)-b(p).
\]
When $k\leq p$, the right-hand side is nonpositive, so regret is zero. When $k>p$, the $B$-Lipschitz property of $b$ bounds regret by $B(k-p)$, while monotonicity of $q$ bounds it by $k-p$. Regret is therefore at most $\beta(k-p)\leq\beta\capex$. These inequalities hold for every best response, proving
\[
\wcr([0,\capex],0)
=\max\{\alpha(K-\capex),\beta\capex\}.
\]
Balancing the two terms gives $\capex_0=\alpha K/(\alpha+\beta)$ and $\regret_0^*=\beta\capex_0$. The lower bound \eqref{eq:delegation-lower-bound} establishes optimality among all closed menus.
\hfill\qedsymbol

\section{Proof of Proposition \ref{prop_contract_k}}
\label{apx_proof_contract_k}\label{app:prop2}

Write $\alpha:=\hat A-1$. Throughout the lower-bound argument, let
$(\menu,w)$ be any admissible contract, where $\menu$ is nonempty
and closed and $w$ is nonnegative and upper semicontinuous, and
let $\capex:=\max\menu$. The environment $(q(k),b(k))=(Ak,0)$ gives
the useful bound
\begin{equation}
\label{eq:konly-cap-lower}
\wcr(\menu,w)\geq \alpha(K-\capex),
\end{equation}
since every available investment is at most $\capex$ and payments
are nonnegative. None of the lower bounds below requires an
interval menu or differentiability of the competing schedule.

\medskip
\noindent\textbf{Step 1: Pure delegation.}
If $B\geq\hat A$, the environments $(Ak,0)$ and $(0,\hat A k)$ give the
agent the same objective $\alpha k+w(k)$. At any common best
response $z$, their regrets imply
\[
\wcr(\menu,w)
\geq\max\{\alpha(K-z),z\}+w(z)
\geq\frac{\alpha K}{\hat A}.
\]
If instead $B\geq2$, use $(q(k),b(k))=(0,Bk)$. A best response $z$
satisfies
\[
(B-1)z+w(z)\geq(B-1)\capex+w(\capex),
\]
so its regret is at least
\[
z+w(z)\geq(B-1)\capex-(B-2)z\geq\capex.
\]
Together with \eqref{eq:konly-cap-lower}, this gives the same lower
bound $\alpha K/\hat A$. Proposition \ref{prop_delegation} shows that
zero compensation and cap $\alpha K/\hat A$ attain it. These two
arguments cover $B\geq\min\{\hat A,2\}$.

\medskip
\noindent\textbf{Step 2: The lower bound in part \textup{(ii)}.}
Suppose $0<\alpha<B/3$ and $B<\hat A$, which also imply $B<3/2$.
The admissible environments
$(Ak,0)$ and $((1+r)(\hat A-B)k,Bk)$ again give the same objective
$\alpha k+w(k)$. Since $B-\alpha>0$, any common best response $z$
implies
\[
\begin{split}
\wcr(\menu,w)
&\geq\max\{\alpha(K-z),(B-\alpha)z\}+w(z)\\
&\geq\min_{x\in[0,K]}
       \max\{\alpha(K-x),(B-\alpha)x\}
=\alpha K\left(1-\frac{\alpha}{B}\right).
\end{split}
\]
This is the proposed value. The minimizing investment in the
last expression is $\capex^*=\alpha K/B$; this observation does
not impose the same maximum investment on every optimal menu.

\medskip
\begin{samepage}
\noindent\textbf{Step 3: The lower bound in part \textup{(iii)}.}
Suppose $\alpha\geq B/3$ and $B<\min\{\hat A,2\}$. Equivalently to
\eqref{eq:konly-parameters}, write
\[
s_0:=\max\{B/3,B-1\},\qquad
t=\frac{Ks_0}{B},\qquad
a=(B/2-s_0)t^2,\qquad d=s_0(K-t)^2.
\]
\end{samepage}
Then $B-1\leq s_0\leq\alpha$, $0<s_0<B/2$, and $0<t<K/2$.
For any $\capex\geq t$, define
\[
\rho_{\capex}:=\frac B2\capex+\frac a{\capex},\qquad
w_{\capex}(k):=
\begin{cases}
\rho_{\capex}-d/(K-k),&0\leq k\leq t,\\
\rho_{\capex}-Bk/2-a/k,&t<k\leq\capex.
\end{cases}
\]
The identities $s_0(K-t)=(B-s_0)t$ and
$a/t^2=B/2-s_0$ show that the branches agree in value and
derivative. Thus $w_{\capex}$ is continuously differentiable,
strictly decreasing and strictly concave, with
$w_{\capex}(\capex)=0$.

Every $x\in[0,\capex]$ is supported by an admissible environment
that gives regret $\rho_{\capex}$ under this reference schedule.
To see this, let
\[
s(x):=-w_{\capex}'(x)=
\begin{cases}d/(K-x)^2,&x\leq t,\\
B/2-a/x^2,&x>t.
\end{cases}
\]
Concavity makes $x$ a maximizer of $s(x)k+w_{\capex}(k)$.
If $x\leq t$, choose $(q(k),b(k))=((1+r)(1+s(x))k,0)$, which is
admissible because $0<s(x)\leq s_0\leq\alpha$. Its regret at $x$ is
$s(x)(K-x)+w_{\capex}(x)=\rho_{\capex}$.
If $x>t$, choose $((1+r)(1+s(x)-B)k,Bk)$, which is admissible because
$B-1\leq s_0<s(x)<B/2<1$. Its regret is
$(B-s(x))x+w_{\capex}(x)=\rho_{\capex}$.

Now choose
\[
x^*\in\arg\max_{k\in\menu}\{w(k)-w_{\capex}(k)\},\qquad
\Delta:=w(x^*)-w_{\capex}(x^*).
\]
Such a maximizer exists by upper semicontinuity and compactness,
and $\Delta\geq w(\capex)\geq0$. In the environment supporting $x^*$,
every $k\in\menu$ satisfies
\[
s(x^*)k+w(k)
\leq s(x^*)k+w_{\capex}(k)+\Delta
\leq s(x^*)x^*+w(x^*).
\]
Hence $x^*$ is also a best response to the actual contract,
and its regret is $\rho_{\capex}+\Delta$. Adverse tie-breaking yields
\begin{equation}
\label{eq:konly-reference-lower}
\wcr(\menu,w)\geq
\max\{\alpha(K-\capex),\rho_{\capex}\},\qquad \capex\geq t.
\end{equation}
The first term decreases strictly with $\capex$, whereas
$\rho_{\capex}'=B/2-a/\capex^2\geq s_0>0$.
Moreover, $\rho_t=s_0(K-t)\leq\alpha(K-t)$ and $\rho_K>0$.
Their unique intersection is the proposed cap $\capex^*\in[t,K)$.
For $\capex<t$, \eqref{eq:konly-cap-lower} exceeds
$\alpha(K-t)\geq\alpha(K-\capex^*)$.
Consequently every contract has regret at least
\begin{equation}
\label{eq:konly-candidate-value}
\regret^*=\alpha(K-\capex^*).
\end{equation}

\medskip
\noindent\textbf{Step 4: The proposed compensation attains the bound.}
It remains to verify the candidate in both compensation regions.
In part \textup{(ii)}, set $t=\capex^*$ and
$d=\regret^*(K-\capex^*)$; then
\[
w^*(k)=\regret^*\frac{\capex^*-k}{K-k}
=\regret^*-\frac d{K-k}.
\]
In part \textup{(iii)}, retain the proposition's $t,a,d$.
In both cases the candidate is nonnegative, continuously
differentiable, strictly decreasing and strictly concave on
$[0,\capex^*]$. Writing $s(k):=-w^{*\prime}(k)$, it satisfies
\[
0<s(k)<B/2<1,\qquad
B-1\leq s(t)=Bt/K\leq\alpha,\qquad
d=Bt(K-t)^2/K,\qquad w^*(\capex^*)=0.
\]

Fix any environment, any agent best response $k$, and any
first-best investment $k_P$. If $k_P>k$ and $k<\capex^*$,
right-sided optimality gives $\pi'_+(k)\leq s(k)$, since
$b'_+(k)\geq0$. Concavity therefore bounds regret by
\[
U(k):=w^*(k)+s(k)(K-k).
\]
If $k_P<k$, left-sided optimality gives
$-\pi'_-(k)\leq B-s(k)$; nondecreasing production also gives
$-\pi'_-(k)\leq1$. Concavity bounds regret by
\[
O(k):=w^*(k)+\min\{1,B-s(k)\}k.
\]
These arguments include $k=0$ in the first case and
$k=\capex^*$ in the second. The remaining case
$k=\capex^*<k_P$ gives regret at most
$\alpha(K-\capex^*)=\regret^*$ directly. If operating profit is
already first best, regret is $w^*(k)\leq\regret^*$.

On the first branch, $U(k)=\regret^*$. On the second branch,
which exists only in part \textup{(iii)}, $t\geq K/3$ and
\[
U'(k)=\frac{2aK}{k^3}-B
\leq B(K/t-3)\leq0.
\]
Thus $U(k)\leq\regret^*$ throughout. The parameter identities give
$O(t)=\regret^*$. For $k>t$, the second branch exists and
$s(k)\geq s(t)\geq B-1$, so its formula yields
$O(k)=\regret^*$.
For $k<t\leq K/3$, the function $w^*(k)+(B-s(k))k$ is
nondecreasing because
\[
\frac{d}{dk}\bigl[w^*(k)+(B-s(k))k\bigr]
=B-\frac{2dK}{(K-k)^3}
\geq\frac{B(K-3t)}{K-t}\geq0.
\]
It bounds $O(k)$ from above and equals $\regret^*$ at $t$.
If $t>K/3$, then $s(t)=B-1$, so for $k<t$ we instead have
$O(k)=w^*(k)+k$. This expression is increasing, since
$1-s(k)>0$, and again equals $\regret^*$ at $t$.
Hence every environment and every best response generate regret
at most $\regret^*$.

Finally, the two admissible environments
$((1+r)(1+s(t))k,0)$ and $((1+r)(1+s(t)-B)k,Bk)$ both induce the unique
choice $t$, including when $t=\capex^*$. Their regrets are
$w^*(t)+s(t)(K-t)=w^*(t)+(B-s(t))t=\regret^*$.
Together with the lower bounds, this proves the proposition.
\hfill\qedsymbol

\section{Proofs of the Comparative Statics}
\label{sec:statics-proof}

Write $\alpha=\hat A-1$, $\beta=\min\{B,1\}$, and
$z=\capex^*/K$. Derivatives below hold the other primitives, including $r$, fixed
and are taken within the stated parameter regions. Payment
comparisons hold investment $k$ fixed below the cap unless
otherwise specified.

\paragraph{Proof of Corollary \ref{cor:cap-statics}.}
For cases~(i) and~(ii), direct differentiation gives
\begin{equation}\label{eq:cap-statics-simple}
\begin{array}{c|ccc}
&\partial_A\capex^*&\partial_B\capex^*&\partial_K\capex^*\\ \hline
\text{(i)}&\dfrac{K}{(1+r)\hat A^2}&0&\alpha/\hat A\\[2pt]
\text{(ii)}&\dfrac{K}{(1+r)B}&-K\alpha/B^2&\alpha/B.
\end{array}
\end{equation}
For case~(iii), define $\tau=t/K$, $h=a/K^2$, and
$\delta=d/K^2$. These coefficients depend only on $B$:
\begin{equation}\label{eq:normalized-coefficients}
\begin{array}{c|cc}
&0<B\leq3/2&3/2<B<2\\ \hline
\tau&1/3&(B-1)/B\\[2pt]
h&B/54&(B-1)^2(2-B)/(2B^2)\\[2pt]
\delta&4B/27&(B-1)/B^2.
\end{array}
\end{equation}
The cap equation is $\alpha(1-z)=Bz/2+h/z$. Let
$s_z=B/2-h/z^2$ and $D=\alpha+s_z$. Since $z\geq\tau$,
\[
s_z\geq B/2-h/\tau^2=\max\{B/3,B-1\}>0.
\]
Implicit differentiation therefore yields
\begin{equation*}\label{eq:cap-statics-implicit}
\partial_A\capex^*=\frac{K(1-z)}{(1+r)D}>0,
\qquad
\partial_B\capex^*=-K\frac{z/2+h_B/z}{D}<0,
\qquad
\partial_K\capex^*=z>0,
\end{equation*}
where
\begin{equation*}\label{eq:h-derivative}
h_B=\begin{cases}
1/54,&B\leq3/2,\\
-\dfrac{(B-1)(B^2+B-4)}{2B^3},&3/2<B<2.
\end{cases}
\end{equation*}
The sign of the $B$ derivative is immediate when $B\leq3/2$.
For $B>3/2$,
\[
z^2/2+h_B\geq\tau^2/2+h_B
=\frac{(B-1)(2-B)}{B^3}>0.
\]

Where cases~(ii) and~(iii) meet, $\alpha=B/3$ and both give $z=1/3$;
their derivatives agree. At $B=3/2$, $h$ and $h_B$ agree
across the two formulas in \eqref{eq:normalized-coefficients}.
At the boundary $\hat A=B<2$, the compensated cap tends to
$z=(B-1)/B=\alpha/\hat A$. At $B=2$ and $\hat A\geq2$, $h$ tends
to zero and the cap equation again gives $z=\alpha/\hat A$.
Thus the cap is continuous everywhere. The derivative signs
extend to global strict monotonicity in $A$ and $K$ and weak
monotonicity in $B$, with strict decrease in $B$ whenever
compensation is useful. Proportionality to $K$ follows from
the independence of $z$ from $K$. Because $A$ and $r$ enter
only through $\hat A=A/(1+r)$,
\[
\partial_r\capex^*=-\frac{A}{1+r}\partial_A\capex^*<0.
\]
\hfill\qedsymbol

\paragraph{Proof of Corollary \ref{cor:delegation}.}
In case~(i), $B>1$ and $\capex^*=\capex_0$.
In case~(ii), $\alpha+\beta>B$: this is immediate for
$B\leq1$, and follows from $\hat A>B$ for $B>1$. Hence
$z=\alpha/B>\alpha/(\alpha+\beta)=\capex_0/K$.
In case~(iii),
\[
\frac{\alpha(1-z)}z
=\frac B2+\frac h{z^2}
\leq\frac B2+\frac h{\tau^2}
=\begin{cases}2B/3,&B\leq3/2,\\1,&B>3/2.\end{cases}
\]
For $B<3/2$, the final bound is strictly below $\beta$.
For $B\geq3/2$, $\hat A>B$ implies $z>\tau$, so the inequality
is strict and again gives a value below $\beta=1$.
Rearranging establishes $\capex^*>\capex_0$. Finally,
$\regret^*=\alpha(K-\capex^*)$ and
$\regret_0^*=\alpha(K-\capex_0)$ give the regret comparison.
\hfill\qedsymbol

\paragraph{Proof of Corollary \ref{cor:compensation-statics}.}
Set $x=k/K$, $v(x)=w^*(Kx)/K$, and
$\rho=\regret^*/K=\alpha(1-z)$. Throughout this proof,
$0\leq x<z$. In case~(ii),
\[
v(x)=\alpha(1-z)\frac{z-x}{1-x},\qquad z=\alpha/B<1/3.
\]
Differentiating gives
\begin{equation}\label{eq:wage-low-statics}
v_A(x)=\frac{z(2-3z)-(1-2z)x}{(1+r)(1-x)}>0,
\qquad
v_B(x)=-\frac{\alpha z}{B}\frac{1-2z+x}{1-x}<0.
\end{equation}
For the first sign, the numerator decreases in $x$ and
equals $z(1-z)>0$ at $x=z$. The second uses $z<1/3$.
Moreover, $v_B$ is strictly decreasing in $x$. Writing
$-w_k^*(k)=d/(K-k)^2$, where
$d=K^2\alpha(1-z)^2$, gives
\[
d_A=\frac{K^2(1-z)(1-3z)}{1+r}>0,
\qquad
d_B=\frac{2K^2\alpha z(1-z)}B>0.
\]
Thus both $A$ and $B$ steepen the schedule in this case.

In case~(iii),
\[
v(x)=\begin{cases}
\rho-\delta/(1-x),&0\leq x\leq\tau,\\
\rho-Bx/2-h/x,&\tau\leq x\leq z.
\end{cases}
\]
The coefficients $\tau,h,\delta$ are independent of $A$,
so the increase in compensation is the same at every common
investment:
\begin{equation*}\label{eq:wage-A-high}
v_A(x)=\rho_A=\frac{(1-z)s_z}{(1+r)D}>0.
\end{equation*}
For the private-benefit bound, define
\begin{equation*}\label{eq:wage-B-high}
\eta:=\rho_B=\frac{\alpha}{D}\left(\frac z2+\frac{h_B}z\right)>0,
\qquad v_B(x)=\eta-J(x),
\end{equation*}
where
\begin{equation*}\label{eq:J}
J(x)=\begin{cases}
\delta_B/(1-x),&0\leq x\leq\tau,\\
x/2+h_B/x,&\tau\leq x\leq z,
\end{cases}
\qquad
\delta_B=\begin{cases}
4/27,&B\leq3/2,\\
(2-B)/B^3,&3/2<B<2.
\end{cases}
\end{equation*}
The two expressions for $J$ agree at $\tau$. The first is
strictly increasing. On the second branch,
$J'(x)=1/2-h_B/x^2>0$: if $h_B\leq0$, the sign is
immediate; otherwise its minimum is at $\tau$, where it
equals $1/3$ for $B\leq3/2$ and
$(B^2-2)/(B(B-1))>0$ for $B>3/2$.
Consequently, $v_B$ is strictly decreasing in investment and
the schedule becomes steeper as $B$ rises. Also,
\[
J(0)=\delta_B,\qquad
\eta=\frac{\alpha}{\alpha+s_z}J(z)<J(z).
\]
If $\eta<\delta_B$, compensation falls at every common
investment; at equality, its derivative is zero only at
$k=0$. If $\eta>\delta_B$, the payment derivative changes
sign exactly once, from positive to negative, at the unique
$x\in(0,z)$ satisfying $J(x)=\eta$.

These comparisons also imply the corresponding finite-change
patterns when the comparison stays in a compensation region.
In particular, the difference between the schedules associated
with two values of $B$ is strictly decreasing over their common
menu: integrate the strictly decreasing derivative $w_B^*(k)$
over the intervening values of $B$. At the new, lower cap, the
new payment is zero and the old payment is strictly positive.
Thus, if the payment at zero weakly falls, payments strictly
fall at every positive investment in the common menu. If the
payment at zero rises, continuity gives exactly one crossing
strictly inside the common menu, with higher payments below
and lower payments above. When both contracts belong to
case~(ii), \eqref{eq:wage-low-statics} implies that every
payment in the common menu falls, including the payment at zero.

Finally, the representation $w^*(k)=Kv(k/K)$ gives exact
proportionality at a fixed expenditure share and
\begin{equation*}\label{eq:wage-K}
\left.\frac{\partial w^*(k)}{\partial K}\right|_k
=v(x)-xv_x(x)>0,
\end{equation*}
because $v(x)>0$ and $v_x(x)<0$ below the cap.
Strict concavity gives
\[
\left.\frac{\partial(-w_k^*(k))}{\partial K}\right|_k
=\frac{x}{K}v_{xx}(x)<0\qquad (k>0).
\]
Hence a larger budget flattens the schedule at a fixed positive
investment. At a moving joining point these slope comparisons
use one-sided derivatives, both of which have the stated sign.
All payment derivatives vanish in the interior of case~(i).
Within either compensation region, dependence on $A$ and $r$
is only through $\hat A$. Hence
\[
\partial_r w^*(k)=-\frac{A}{1+r}\partial_A w^*(k)<0.
\]
The case~(iii) change is constant over the common menu, while
the case~(ii) slope comparison follows by reversing the sign
of the corresponding $A$ comparison above.
The cap and payment-level formulas agree at the internal
compensation boundaries, but the selected schedule need not
be continuous at the pure-delegation boundary. The payment
comparisons therefore have the regional scope stated in the
corollary.
\hfill\qedsymbol

\paragraph{The productivity threshold for a payment increase.}
The sign test $\eta>\delta_B$ in case~(iii) is equivalent to
$\hat A>\hat A_c(B)$ for a unique threshold $\hat A_c(B)$ at each fixed
$B\in(0,2)$. To verify uniqueness, put
\[
\begin{gathered}
c=h/B=\tau^2(1-2\tau)/2,\qquad G(z)=z/2+c/z,\\
E(z)=G(z)+(1-z)G'(z),\qquad N(z)=z/2+h_B/z.
\end{gathered}
\]
Eliminating $\alpha$ using the cap equation gives
$\eta=G(z)N(z)/E(z)$. Both $N$ and $N'$ are positive,
and the derivative of $G/E$ has the sign of
\[
H(z)=\frac14-\frac{2c(1-z)}{z^2}-\frac{c^2}{z^4}.
\]
Since $H(\tau)=(1-\tau)(3\tau-1)\geq0$ and
$H'(z)=2c(2c+2z^2-z^3)/z^5>0$, $\eta$ increases
strictly with $z$, hence with $A$. At the lower boundary
of case~(iii), $\eta$ tends to $1/9<4/27=\delta_B$
when $B\leq3/2$, and to
$(2-B)/(2B^2)<(2-B)/B^3=\delta_B$ when $B>3/2$.
As $A\to\infty$, it tends to $1/2+h_B>\delta_B$.
The threshold therefore exists and is unique.

For $B\leq3/2$, it has the explicit form
$\hat A_c(B)=1+\lambda_0B$, where
\[
\lambda_0=\frac{z_0/2+1/(54z_0)}{1-z_0}
\simeq0.50045643,
\qquad
729z_0^4-162z_0^2-16z_0+9=0,\quad z_0\in(1/3,1).
\]
For $3/2<B<2$, it is equivalently characterized by
\[
(Bz_c^2+2h)(z_c^2+2h_B)
=2\delta_B(Bz_c^2+4hz_c-2h),
\qquad
\hat A_c(B)=1+\frac{Bz_c/2+h/z_c}{1-z_c},
\]
where the first equation has a unique solution
$z_c\in(\tau,1)$.

\section{Compensation Contingent on Realized Output}\label{app:outcome-compensation}

This appendix studies the output-contingent contracts discussed in Section \ref{sec:extensions}. We define the contract class, establish a lower bound on regret, and construct contracts that approach this bound in a specified parameter region.

Throughout, let $y=q(k)$ denote undiscounted date-1 output.

\paragraph{Contract space.}
An outcome-contingent contract consists of a nonempty closed menu
$\menu\subseteq[0,K]$ that contains zero, and a finite, nonnegative, jointly
upper-semicontinuous payment schedule $w(k,y)$ on
\[
\{(k,y):k\in\menu,\ 0\leq y\leq Ak\}.
\]
In environment $(q,b)\in\envir$, the manager chooses
\[
\operatorname{BR}(\menu,w;q,b)
:=
\operatorname*{arg\,max}_{k\in\menu}
\{\pi(k;q)+b(k)+w(k,q(k))\}.
\]
These conditions ensure that a best response exists in every
environment. Headquarters receives $\pi(k;q)-w(k,q(k))$. 
 As in Section \ref{sec:regret}, regret is the shortfall from the informed benchmark $V(q)$. Slightly abusing the notation, we define worst-case regret as
\[
\wcr(\menu,w):=
\sup_{(q,b)\in\envir}
\max_{k\in\operatorname{BR}(\menu,w;q,b)}
\bigl\{V(q)-\pi(k;q)+w(k,q(k))\bigr\}.
\]
The inner maximum imposes adverse tie breaking. We use a
supremum over environments because the regularity conditions do not guarantee that a worst-case environment
exists when output is contractible. 
All optimization statements below refer to the output-contingent contract class defined here.

When $B\geq2$, an uncompensated spending cap remains optimal. The argument in Section \ref{sec:joint} applies with $w(k)$ replaced by $w(k,0)$ in the zero-output environment: inducing less investment costs at least as much as the operating loss it avoids. Thus, writing $\capex=\max\menu$, worst-case regret is at least $\max\{\capex,\alpha(K-\capex)\}$. Minimizing this bound gives $\alpha K/\hat A$, attained by the uncompensated cap in Proposition \ref{prop_contract_k}(i). For $0<B<2$, we first establish a lower bound for every admissible output-contingent contract, then construct a family that approaches it when $0<\alpha\leq B/(2-B)$.

\begin{lemma}
\label{lem:out-lower-bound}
Suppose $0<B<2$ and $\alpha=\hat A-1=A/(1+r)-1>0$.
Every admissible outcome-contingent contract $(\menu,w)$ satisfies
\[
\wcr(\menu,w)\geq\frac{\alpha B}{2\alpha+B}K.
\]
\end{lemma}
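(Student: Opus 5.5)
The plan is to mirror the lower-bound arguments in Steps 1--3 of the proof of Proposition~\ref{prop_contract_k}. The new difficulty is that payments may now depend on realized output $y=q(k)$. So every comparison I use must pair environments that produce the \emph{same} output at the relevant investment, not merely the same managerial objective. The target value has a clean interpretation:
\[
\frac{\alpha B}{2\alpha+B}K=\alpha(K-x^\dagger)=\frac B2x^\dagger,\qquad x^\dagger:=\frac{2\alpha}{2\alpha+B}K .
\]
So I aim to show $\wcr(\menu,w)\geq\max\{\alpha(K-x),\tfrac B2x\}$ for a suitable investment $x$ that the contract effectively allows. Minimizing this over $x$ gives the stated bound.

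\textbf{Step 1 (excluded investment).} Let $\capex=\max\menu$. The environment $(Ak,0)$ gives regret at least $\alpha(K-\capex)$, exactly as in \eqref{eq:konly-cap-lower}, since payments are nonnegative. It therefore suffices to show that the contract also incurs regret of at least $\tfrac B2\capex$, or, if it deters investment below some $x<\capex$, regret of at least $\max\{\alpha(K-x),\tfrac B2 x\}$.

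\textbf{Step 2 (triangle environments).} For each $x\in(0,\capex]$ I would use a family of concave environments that are nearly unprofitable but output-matched to productive ones. The first candidate is $b(k)=\min\{B,1\}\,$-scaled linear benefits together with $\hat q$ chosen so that marginal operating profit rises linearly from $-B$ at $k=0$ to $0$ at $k=x$. For $B\le1$ this means $\hat q'(k)=1-B+Bk/x$ on $[0,x]$, and I cap the slope at $\hat A$ beyond $x$. These choices make the manager's marginal payoff $Bk/x\geq0$ absent payments. Informed headquarters would invest $0$ (or $x$), while the operating loss accumulated from $0$ to $x$ is the triangle area $\tfrac B2x$. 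This is where the factor $\tfrac12$ comes from. For $1<B<2$ I would replace the $-B$ start by $-1$ and put the slack into $b$, keeping $q$ nondecreasing. This is the place where $B<2$ should enter.

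The key comparison is as follows. Whatever payment $w(k,q(k))$ the contract offers along this output path, it must either let the manager reach $x$, giving regret at least $\tfrac B2x$, or pay her enough at some $z<x$ to stop. In the latter case I pair this environment with a productive environment that coincides with it on $[0,z]$, so it has the same output $y$ at $z$ and hence the same payment, but continues with slope $\hat A$ after $z$. A manager with that environment who stops at $z$ leaves regret of at least $w(z,y)+\alpha(K-z)$ minus the profit already lost on $[0,z]$. Alternatively, the contract has to pay on the productive path as well. Adverse tie-breaking, handled exactly as in Step 3 of Proposition~\ref{prop_contract_k} by choosing the maximizer of $w$ minus a reference schedule, converts this dichotomy into $\max\{\alpha(K-z),\tfrac B2 z\}$ up to the payment terms. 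Those payment terms enter both branches with a plus sign.

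\textbf{Step 3 (main obstacle).} The hardest step is making the pairing in Step 2 airtight. The two environments must share output at the stopping point and produce the same managerial ranking over all of $\menu$, so that the stopping point is a best response in both. They must also remain admissible: concave, nondecreasing, and within the bounds $Ak$ and $Bk$. Because only output at the chosen $k$ is observed, but the manager compares payments at other $k$ with other outputs, I would first try to make the two environments agree on all of $[0,z]$ and differ only beyond $z$. Then payments at every $k\le z$ coincide, and the productive environment's extra upside beyond $z$ is exactly what the contract cannot reward without also rewarding the triangle environment. If this agreement cannot be arranged for $B\in(1,2)$, the fallback is to take a supremum over $\varepsilon$-perturbed environments, which is why $\wcr$ is defined with a supremum. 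Finally, minimizing $\max\{\alpha(K-x),\tfrac B2x\}$ gives $x^\dagger$ and the bound $\alpha BK/(2\alpha+B)$.
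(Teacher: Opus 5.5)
Your proposal has a genuine gap. The environments that Step 2 relies on are not admissible, and the pairing you flag in Step 3 cannot be repaired.

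Your ``triangle'' environment has $\hat q'(k)=1-B+Bk/x$ increasing on $[0,x]$. So $q$ is convex there rather than concave: under diminishing returns, marginal operating profit must fall, not rise.

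The pairing fails for a more basic reason. Take any environment in which informed headquarters invests zero. Then $\pi(z)\le 0$ at every $z>0$, that is, $\hat q(z)\le z$. Any concave $q$ with $q(0)=0$ and the same output at $z$ then satisfies $\hat q'_+(z)\le \hat q(z)/z\le 1$, so no expansion beyond $z$ is profitable. Hence there is no productive environment that shares output with a loss-making one at the stopping point (let alone coincides with it on $[0,z]$) and then continues with slope $\hat A$. Diminishing returns rule out exactly the continuation you need. This is the very feature that the contract in Proposition~\ref{prop:out-threshold} exploits by paying only when realized operating profit is nonpositive. An $\varepsilon$-perturbation does not help, because the obstruction is concavity, not a boundary or attainment issue.

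The missing idea is that no pairing is needed at all. Use the single linear environment $q(k)=(1+r)(1-\frac B2)k$, $b(k)=Bk$. It is admissible precisely because $B<2$ (so $q$ is nonnegative and nondecreasing) and $A>1+r$.

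Operating profit is $-\frac B2 k$, so $V(q)=0$. At every $k\in\menu$, the manager's payoff $\frac B2 k+w(k,q(k))$ coincides exactly with headquarters' regret at $k$. Hence any best response $k_A$ satisfies $\regret(k_A;q,w)=U_A(k_A)\ge U_A(\capex)\ge \frac B2\capex$, however $w$ depends on output: inducing restraint costs exactly the overinvestment loss it avoids.

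The environment $(Ak,0)$ gives $\alpha(K-\capex)$, as in your Step 1. Minimizing $\max\{\alpha(K-\capex),\frac B2\capex\}$ over $\capex$ then yields $\frac{\alpha B}{2\alpha+B}K$. The factor $\frac12$ comes from splitting $B$ equally between the firm's operating loss and the manager's net gain, not from a triangle area.
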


\paragraph{Construction.}
Recall $\hat A=A/(1+r)$ and $\alpha=\hat A-1$.
Fix the investment cap and choose a positive payment threshold
satisfying
\begin{equation}
\label{eq:out-cap}
\capex=\frac{\alpha K}{\alpha+B/2},
\qquad
0<k_1<\frac{B}{B+2}\capex.
\end{equation}

\begin{proposition}
\label{prop:out-threshold}
Suppose $0<B<2$, $0<\alpha\leq\frac{B}{2-B}$, and \eqref{eq:out-cap} holds. Then
the contract $\menu=[0,\capex]$ with compensation
\begin{equation}
\label{eq:out-payment}
w(k,y)=
\begin{cases}
\displaystyle\frac B2(\capex-k),
&k_1\leq k\leq\capex,\quad y\leq(1+r)k,\\[4pt]
0,&\text{otherwise}
\end{cases}
\end{equation}
is admissible and has exact worst-case regret
\begin{equation}
\label{eq:out-policy-value}
\wcr([0,\capex],w)
=
\frac{\alpha B}{2\alpha+B}K
+\left(1-\frac B2\right)k_1.
\end{equation}
Consequently, throughout this parameter region,
\begin{equation}
\label{eq:out-value-bounds}
\inf_{\menu,w}\wcr(\menu,w)
=\frac{\alpha B}{2\alpha+B}K.
\end{equation}
Contracts of the form \eqref{eq:out-payment} approach this
infimum as $k_1$ decreases to zero.
\end{proposition}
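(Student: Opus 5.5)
The plan has three parts. First, check that the contract in \eqref{eq:out-payment} is admissible. Second, show that its worst-case regret is exactly $R_0+(1-\frac B2)k_1$, where $R_0:=\frac{\alpha B}{2\alpha+B}K$. Third, combine this with Lemma \ref{lem:out-lower-bound} to obtain \eqref{eq:out-value-bounds}.

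The cap in \eqref{eq:out-cap} is chosen so that $\alpha(K-\capex)=\frac B2\capex=R_0$. Both identities follow directly from $\capex(\alpha+B/2)=\alpha K$.

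\textbf{Admissibility.} The set on which the payment is positive, $\{k_1\leq k\leq\capex,\ y\leq(1+r)k\}$, is closed. On that set the payment $\frac B2(\capex-k)$ is continuous and nonnegative, and it is zero elsewhere. A nonnegative function that is continuous on a closed set and zero off it is jointly upper semicontinuous. It is also finite and bounded by $\frac B2\capex$, so the contract lies in the class defined above.

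\textbf{Upper bound on regret.} Fix an environment, a best response $k\in[0,\capex]$, and a first-best investment $k_P$. I would split into three cases according to the payment region.

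(a) Suppose $k\geq k_1$ and $\pi(k)\leq 0$, so the manager is paid. Because $\pi$ is concave with $\pi(0)=0$ and $\pi(k)\leq0$, profit is nonincreasing beyond $k$. Hence $k_P\leq k$, and only overinvestment loss plus payment remain. Left-sided optimality against the schedule slope $-\frac B2$ gives $-\pi'_-(k)\leq B-\frac B2=\frac B2$. Concavity then bounds the loss by $\frac B2 k$, so regret is at most $\frac B2 k+\frac B2(\capex-k)=R_0$.

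(b) Suppose $\pi(k)>0$ or $k<k_1$, so the payment is zero. Regret is then only an operating-profit gap. For $k_P>k$, I would compare $k$ with the feasible choice $\capex$ as in Proposition \ref{prop_delegation}. Some care is needed here: the deviation to $\capex$ can carry a zero payment, whereas the deviation to $k_1$ or to points in $[k_1,\capex]$ may carry a payment that tempts the manager. Regret is at most $\alpha(K-\capex)=R_0$ when the relevant comparison point is $\capex$. When the manager stays just below $k_1$ in order to avoid crossing into the paid region, the extra loss is governed by the jump of the payment at $k_1$. For $k_P<k$, the bound from profit monotonicity and the private-benefit bound is $\min\{1,B\}$ times the overinvestment. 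This must be checked against the fact that the manager declined the payment available at lower paid investments.

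(c) The boundary configurations where $k$ is near $k_1$ are where the additional $(1-\frac B2)k_1$ should arise.

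I expect the parameter restriction $\alpha\leq\frac{B}{2-B}$ to be exactly what makes the case (b) bounds collapse to $R_0$ away from the threshold. Equivalently, it is what keeps $(2-B)\alpha\leq B$ from letting a productive-project deviation exceed $R_0$.

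\textbf{Matching lower bound (the main obstacle).} Pinning the supremum down exactly at $R_0+(1-\frac B2)k_1$ requires an explicit sequence of environments, and I expect this to be the hardest step. I would try environments with a kinked concave $q$ whose operating profit equals zero exactly at $k_1$. Output then just qualifies for the payment $\frac B2(\capex-k_1)$. Such an environment pairs the maximal private benefit on $[0,k_1]$ with an overinvestment loss close to $k_1$ on the initial segment. Adjusting the kink should drive the regret toward $\frac B2(\capex-k_1)+k_1=R_0+(1-\frac B2)k_1$, with adverse tie-breaking selecting $k_1$. Because of the usc and threshold structure, the supremum may only be approached rather than attained, which is why the definition of $\wcr$ here uses a supremum.

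\textbf{Conclusion.} Lemma \ref{lem:out-lower-bound} gives $\inf\wcr\geq R_0$. Since $k_1$ can be taken arbitrarily small in \eqref{eq:out-cap}, letting $k_1\downarrow0$ in \eqref{eq:out-policy-value} gives the matching upper bound, which proves \eqref{eq:out-value-bounds} and the final approximation claim.
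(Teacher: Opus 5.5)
Your outline matches the paper's: admissibility, a case split for the upper bound, an exactness environment, then Lemma \ref{lem:out-lower-bound} with $k_1\downarrow0$. However, the core overinvestment step is missing, and your exactness construction is wrong.

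\textbf{Overinvestment at nonnegative-profit choices.} Case (a) uses left-sided optimality against the slope $-\frac B2$. That requires the paid region to extend to the left of $k$, which fails in two places.
\begin{itemize}
\item At $k=k_1$, the points just below are unpaid.
\item Whenever $\pi(k)=0$, concavity and $\pi(0)=0$ give $\pi\ge0$ on $[0,k]$, so lower investments can be unpaid and the manager may jump into the payment. For instance, $q(v)=\min\{Av,(1+r)k\}$ with $b=0$ has $-\pi'_-(k)=1>\frac B2$.
\end{itemize}
In case (b) with $k_P<k$, your bound $\min\{1,B\}(k-k_P)$ can reach $\min\{1,B\}\capex>\frac B2\capex$, so it does not close the case. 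Your remedy is also empty. If $\pi(k)>0$, concavity gives $\pi>0$ on $(0,k]$, so no lower investment is paid and the manager declined nothing.

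The missing idea is an envelope bound. If $q(k)\ge(1+r)k$, then $q(v)\le\min\{Av,q(k)\}$ for $v\le k$, so $\pi(k_P)-\pi(k)\le k-q(k)/A\le\frac{\alpha}{\hat A}k$. The restriction $\alpha\le\frac{B}{2-B}$ is exactly $\frac{\alpha}{\hat A}\le\frac B2$. The payment is at most $\frac B2(\capex-k)$, and here it is positive only at the cutoff $q(k)=(1+r)k$. Hence regret is at most $\frac B2\capex$. This is where the restriction is needed. In the example above, whenever the manager picks $k\ge k_1$, regret is $\frac B2\capex+(\frac{\alpha}{\hat A}-\frac B2)k$.

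The extra $(1-\frac B2)k_1$ arises only at $k=k_1$ in the overinvestment direction. There the only available bound is $\pi(k_P)-\pi(k_1)\le k_1$, from monotone output. By contrast, underinvestment carries no ``jump'' loss. If the selected payment is zero, $k$ maximizes $\pi+b$ on $[0,\capex]$, and the argument of Proposition \ref{prop_delegation} applies directly.

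\textbf{Exactness.} Your kinked environment with $\pi(k_1)=0$ cannot reach the target. Concavity then forces $\pi\le0$ beyond $k_1$, and the envelope bound gives $V(q)\le\frac{\alpha}{\hat A}k_1$. So regret at $k_1$ is at most $\frac{\alpha}{\hat A}k_1+\frac B2(\capex-k_1)$, which falls short of the target by at least $k_1/\hat A$. An overinvestment loss of $k_1$ requires $q(k_1)=0$.

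The paper simply takes $q=b=0$.
\begin{itemize}
\item Every $k\ge k_1$ is paid, and the manager's utility there is $\frac B2\capex-(1+\frac B2)k$, which is strictly decreasing.
\item It is positive at $k_1$ because $(1+\frac B2)k_1<\frac B2\capex$; this is what the threshold condition in \eqref{eq:out-cap} is for.
\item Below $k_1$, utility is $-k\le0$.
\end{itemize}
So $k_1$ is the unique best response, and the regret $k_1+\frac B2(\capex-k_1)$ is attained, not merely approached. Note that \eqref{eq:out-value-bounds} needs only the upper bound on the contract's regret. It therefore depends on closing the first gap, not this one.
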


\paragraph{Intuition.}
Compensation is available only when realized operating profit is nonpositive and investment reaches the positive payment threshold. At a positive investment, concavity implies that nonpositive operating profit rules out profitable expansion beyond that investment. Above the payment threshold and below the output cutoff, compensation falls by $B/2$ per additional unit of investment, offsetting half of the manager's maximal marginal private benefit.

At investments with positive operating profit, the contract withholds compensation. In particular, the productive environment $q(k)=Ak$, $b(k)=0$ induces investment at the cap and generates regret
\[
\alpha(K-\capex).
\]
Above the positive payment threshold, when realized operating profit is nonpositive, the proof bounds regret by
\[
\underbrace{\frac B2k}_{\text{overinvestment loss bound}}
+\underbrace{\frac B2(\capex-k)}_{\text{compensation}}
=\frac B2\capex.
\]
The parameter restriction also controls overinvestment at choices with positive operating profit. The cap balances the two bounds:
\[
\alpha(K-\capex)=\frac B2\capex.
\]
The positive threshold $k_1$ prevents payment at zero investment and introduces the additional worst-case cost $(1-B/2)k_1$. Shrinking this threshold makes the additional cost vanish while preserving admissibility at every step. The result establishes an infimum; it does not establish whether an admissible contract attains it.

\begin{corollary}[The value of output-contingent compensation]
\label{cor:output-value}
Suppose $\hat A>1$ and $B>0$. Let $\regret^*$ denote the
minimum worst-case regret with investment-only compensation,
as characterized in Proposition \ref{prop_contract_k}.
For the outcome-contingent contract class defined above,
\[
\frac89\regret^*
\leq
\inf_{\menu,w}\wcr(\menu,w)
\leq
\regret^*.
\]
Thus, allowing compensation to depend on realized output
can reduce worst-case regret by at most one ninth.
When $B\geq2$, the infimum equals $\regret^*$.
The factor $8/9$ is sharp.
\end{corollary}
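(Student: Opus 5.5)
The plan is to prove the two inequalities separately, then settle the case $B\geq2$ and sharpness.

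\textbf{Upper bound.} Every investment-only contract $(\menu,w)$ from Proposition \ref{prop_contract_k} is also an outcome-contingent contract, with payment $\tilde w(k,y):=w(k)$. Upper semicontinuity of $w$ in $k$ gives joint upper semicontinuity of $\tilde w$ on $\{(k,y):k\in\menu,\ 0\leq y\leq Ak\}$. Best responses, payoffs and regrets coincide in every environment. Hence $\inf_{\menu,w}\wcr(\menu,w)\leq\regret^*$.

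\textbf{The case $B\geq2$.} The paragraph preceding Lemma \ref{lem:out-lower-bound} already gives the lower bound $\max\{\capex,\alpha(K-\capex)\}\geq\alpha K/\hat A$. This bound equals $\regret^*$ by Proposition \ref{prop_contract_k}(i), so the infimum equals $\regret^*$.

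\textbf{Lower bound for $0<B<2$.} Lemma \ref{lem:out-lower-bound} gives $\inf\geq L:=\alpha BK/(2\alpha+B)$. It therefore suffices to show $L\geq\frac89\regret^*$ in each case of Proposition \ref{prop_contract_k}.

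\emph{Case (i) with $B<2$.} Here $B\geq\hat A=\alpha+1$ and $\regret^*=\alpha K/(\alpha+1)$. Since $L$ is increasing in $B$,
\[
\frac{L}{\regret^*}\geq\frac{(\alpha+1)^2}{3\alpha+1}.
\]
The derivative of the right-hand side has the sign of $(\alpha+1)(3\alpha-1)$, so it is minimized at $\alpha=1/3$, where it equals $8/9$.

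\emph{Case (ii).} Put $u=\alpha/B<1/3$. Then
\[
\frac{L}{\regret^*}=\frac{1}{(1+2u)(1-u)}.
\]
The denominator is at most $9/8$, with equality at $u=1/4$.

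\emph{Case (iii).} This is the step I expect to be the main obstacle, because $\regret^*$ is defined only implicitly through \eqref{eq:konly-crossing}. Crude bounds fail. Using $\regret^*-L\leq a/\capex^*\leq a/t$ together with $L\geq B/5\cdot K$ gives only a factor $1+5/18$. So I would work in normalized terms. Let $z=\capex^*/K$, and use $\rho=\alpha(1-z)=Bz/2+h/z$ with $h$ from \eqref{eq:normalized-coefficients}. Eliminate $\alpha=(Bz/2+h/z)/(1-z)$ and write the ratio
\[
\frac{\regret^*}{L}=\rho\,\frac{2\alpha+B}{\alpha B}
\]
as an explicit rational function of $z\in[\tau,1)$ for fixed $B$. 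I would then show this function is monotone in $z$, or at least bounded by its endpoint values, using a sign computation similar to the $H(z)$ argument in Appendix \ref{sec:statics-proof}. The endpoints are controlled as follows.
\begin{itemize}
\item At $z=\tau$ with $B\leq3/2$, continuity with case (ii) at $u=1/3$ gives ratio $10/9<9/8$.
\item At $z=\tau$ with $3/2<B<2$, the case-(iii) lower boundary is $\hat A=B$. There I would check the ratio directly against the case-(i) expression, which gives $\leq 9/8$.
\item As $z\to1$, the ratio tends to $1+2h/B$, which is at most $28/27$.
\end{itemize}
If monotonicity fails, the fallback is to maximize the rational function directly and verify that its value stays below $9/8$ over $0<B<2$.

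\textbf{Sharpness.} Case (ii) at $\alpha=B/4$ attains $8/9$. The case-(i) boundary $\alpha=1/3$, $B=\hat A=4/3$ also attains it. In both, the infimum equals $L$ by \eqref{eq:out-value-bounds}, since the parameter restriction $\alpha\leq B/(2-B)$ holds: $1/3\leq 2$ and $B/4\leq B/(2-B)$.
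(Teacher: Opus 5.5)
Your architecture is the paper's: the upper bound comes from including investment-only contracts, the case $B\geq2$ from the zero-output environment, and the case $0<B<2$ from Lemma \ref{lem:out-lower-bound} plus a case-by-case comparison with Proposition \ref{prop_contract_k}, followed by an explicit instance for sharpness. Cases (i) and (ii) are correct. In (i) you use monotonicity of $L$ in $B$ and minimize $(\alpha+1)^2/(3\alpha+1)$ over $\alpha$. The paper instead fixes $B$ and uses $\alpha\leq B-1$ to get $(3B-2)/B^2=\frac98-\frac{(3B-4)^2}{8B^2}$. Both work. Both of your sharpness instances check out, though $\alpha=B/4$ lies in case (ii) only when $B<4/3$ (e.g.\ the paper's $B=1$, $\alpha=1/4$).

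The gap is case (iii), which you leave as a program rather than a proof. You propose to show that the ratio is monotone, or bounded by its endpoint values, and otherwise to maximize it numerically. Endpoint checks alone cannot bound interior values, so as written the bound there is unproven. What closes it is a one-line consequence of the cap equation \eqref{eq:konly-crossing}, not an $H(z)$-type sign analysis. Multiply the cap equation by $2$ and add $B(K-\capex^*)$ to both sides:
\[
(2\alpha+B)(K-\capex^*)=BK+\frac{2a}{\capex^*},
\qquad\text{hence}\qquad
\frac{\regret^*}{L}=\frac{(2\alpha+B)(K-\capex^*)}{BK}=1+\frac{2a}{BK\capex^*}.
\]
In your normalization this ratio is exactly $1+2h/(Bz)$, which is strictly decreasing in $z$. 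So only the endpoint $z=\tau$ matters, and the separate checks at $\hat A=B$ and $z\to1$ are unnecessary. Using $\capex^*\geq t$ and $a=\frac B2t^2(1-2t/K)$, the ratio is at most $1+\tau(1-2\tau)$ with $\tau=t/K\in[1/3,1/2)$, hence at most $\frac{10}{9}<\frac98$.

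Your crude attempt failed because it bounded $L$ above by $BK/2$ and below by $BK/5$ separately, which decouples $L$ from $\regret^*$. The identity keeps both tied to the same $\alpha$ and $\capex^*$. With this step inserted, your proposal is a complete proof along the paper's lines.
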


\paragraph{Beyond the parameter restriction.}
When $0<B<2$ and $\alpha>\frac{B}{2-B}$, positive operating profit can coexist
with substantial overinvestment when a project has high initial
returns followed by a weak further marginal return. Allowing compensation
to vary with output above break-even ($y = (1+r)k$) may help address this
problem. One approach is to solve the partial differential
equations associated with binding regret bounds on their
respective regions, determining the boundaries between these
regions jointly with the payment schedule.
This is the exact analogy of Proposition \ref{prop_contract_k}. Nevertheless, the difficulty is
global: a solution to one branch equation may violate the other
regret bound, and local incentive conditions alone do not
establish managerial optimality. In particular, technologies
with high initial returns and low subsequent returns must be
considered alongside linear environments, and deviations across
payment boundaries must be checked. Upper semicontinuity permits
jumps, so smooth pasting is not required. Whether this approach can attain
or approximate $\frac{\alpha B}{2\alpha + B}K$ throughout the remaining
parameter region remains unresolved.

\begin{proof}[Proof of Lemma \ref{lem:out-lower-bound}]
Fix $\capex:=\max\menu\in[0,K]$. We first show that
\[
\wcr(\menu,w)\geq
\max\left\{\alpha(K-\capex),\frac B2\capex\right\}
\]
by constructing an admissible environment for each term.

First, consider
\[
q(k)=(1+r)\left(1-\frac B2\right)k,
\qquad b(k)=Bk.
\]
Since $0<B<2$ and $A>1+r$, both functions satisfy the
admissibility restrictions. Operating profit is
$\pi(k;q)=-Bk/2$, so the informed benchmark is $V(q)=0$.
For any managerial best response
$k_A\in\operatorname{BR}(\menu,w;q,b)$, headquarters' regret is
\[
\begin{aligned}
\regret(k_A;q,w)
&=0-\left[\frac{q(k_A)}{1+r}-k_A-w(k_A,q(k_A))\right]\\
&=\frac B2k_A+w(k_A,q(k_A)).
\end{aligned}
\]
The manager's payoff is
\[
\begin{aligned}
U_A(k_A)
&=\frac{q(k_A)}{1+r}-k_A+Bk_A+w(k_A,q(k_A))\\
&=\frac B2k_A+w(k_A,q(k_A))
=\regret(k_A;q,w).
\end{aligned}
\]
Because $\capex\in\menu$, optimality and nonnegative compensation
give, including when $k_A=\capex$,
\[
\regret(k_A;q,w)=U_A(k_A)
\geq\frac B2\capex+w(\capex,q(\capex))
\geq\frac B2\capex.
\]
Thus limited liability imposes a lower bound on regret:
compensation can restrain overinvestment only by replacing
the manager's forgone payoff from investing. For $0<B<2$ and
$\capex>0$,
this bound is strictly below the delegation-only overinvestment
bound $\min\{B,1\}\capex$.

Second, consider $q(k)=Ak$ and $b(k)=0$. Operating profit is
$\pi(k;q)=\alpha k$, and the informed headquarters chooses $K$.
Compensation may induce the manager to invest below the cap,
but every best response satisfies $k_A\leq\capex$. Therefore
\[
\regret(k_A;q,w)
=\alpha(K-k_A)+w(k_A,Ak_A)
\geq\alpha(K-\capex).
\]

Combining the two environments gives
\[
\wcr(\menu,w)\geq
\max\left\{\alpha(K-\capex),\frac B2\capex\right\}.
\]
The first term decreases with the cap and the second increases.
Their maximum is minimized where they coincide, at
\[
\capex=\frac{\alpha K}{\alpha+B/2}.
\]
Consequently,
\[
\min_{\capex\in[0,K]}
\max\left\{\alpha(K-\capex),\frac B2\capex\right\}
=\frac{\alpha B}{2\alpha+B}K,
\]
which proves the claimed lower bound.

\end{proof}
\begin{proof}[Proof of Proposition \ref{prop:out-threshold}]
    
\noindent\textbf{Step 1: Admissibility.}
The paid region in \eqref{eq:out-payment} is closed, and
compensation takes the higher adjacent value at both the
investment threshold and the output cutoff. Payments vanish
throughout a neighborhood of the origin and converge to zero
at the cap. Hence the schedule is finite, nonnegative, and
jointly upper semicontinuous. The contract is admissible,
and a managerial best response exists in every environment.

The cap and parameter restrictions imply
\begin{equation}
\label{eq:out-identities}
\alpha(K-\capex)
=\frac B2\capex
=\frac{\alpha B}{2\alpha+B}K,
\qquad
\frac{\alpha}{\hat A}\leq\frac B2.
\end{equation}
The threshold restriction also gives
\begin{equation}
\label{eq:out-threshold-bounds}
\left(1+\frac B2\right)k_1<\frac B2\capex,
\qquad
Bk_1<\frac B2\capex.
\end{equation}

Fix any environment $(q,b)\in\envir$, any
$k\in\operatorname{BR}([0,\capex],w;q,b)$, and any
$t\in\operatorname*{arg\,max}_{[0,K]}\pi(\cdot;q)$.
Write $\pi(v)=q(v)/(1+r)-v$.

\medskip
\noindent\textbf{Step 2: Underinvestment, $t\geq k$.}
If the selected payment is positive, then $k>0$ and
$q(k)/(1+r)\leq k$. Concavity implies that operating profit
is nonincreasing to the right of $k$. Optimality of $t$
therefore gives $\pi(t)=\pi(k)$, so regret equals compensation
and is at most $B\capex/2$.

If the selected payment is zero, nonnegative compensation
at every other investment implies that $k$ also maximizes
$\pi+b$ on $[0,\capex]$. Comparing with
$\min\{t,\capex\}$ and using monotonicity of $b$ gives
\[
\pi(k)\geq\pi\bigl(\min\{t,\capex\}\bigr).
\]
Regret is therefore zero if $t\leq\capex$. Otherwise,
\[
\regret(k;q,w)
\leq\pi(t)-\pi(\capex)
\leq\alpha(K-\capex)
=\frac B2\capex.
\]

\medskip
\noindent\textbf{Step 3: Overinvestment, $t<k$.}
If $k<k_1$, the selected payment is zero. Comparing with $t$
and using nonnegative compensation at $t$ yields
\[
\regret(k;q,w)
\leq b(k)-b(t)
\leq B(k-t)
\leq Bk_1
<\frac B2\capex.
\]

If $k=k_1$, monotonicity of output gives
\[
\pi(t)-\pi(k_1)\leq k_1-t\leq k_1.
\]
Consequently,
\[
\begin{aligned}
\regret(k_1;q,w)
&\leq k_1+\frac B2(\capex-k_1)\\
&=\frac B2\capex+\left(1-\frac B2\right)k_1.
\end{aligned}
\]

Now suppose $k_1<k\leq\capex$.
If $q(k)<(1+r)k$, the paid formula applies immediately to
the left of $k$, including when $k=\capex$ and the selected
payment is zero. The manager's left optimality condition gives
\[
\pi'_-(k)+b'_-(k)-\frac B2\geq0,
\qquad
-\pi'_-(k)\leq\frac B2.
\]
Concavity therefore implies
\[
\pi(t)-\pi(k)
\leq-\pi'_-(k)(k-t)
\leq\frac B2k.
\]
Adding compensation yields
\[
\regret(k;q,w)
\leq\frac B2k+\frac B2(\capex-k)
=\frac B2\capex.
\]

If instead $q(k)\geq(1+r)k$, then for every $v\leq k$,
\[
q(v)\leq\min\{Av,q(k)\}.
\]
Maximizing this envelope after discounting output and
subtracting investment gives
\[
\pi(t)-\pi(k)\leq k-\frac{q(k)}A.
\]
Using \eqref{eq:out-identities} and the bound
$w(k,q(k))\leq B(\capex-k)/2$, we obtain
\[
\begin{aligned}
\regret(k;q,w)
&\leq \frac B2(\capex-k)+k-\frac{q(k)}A\\
&\leq \frac B2(\capex-k)+\frac{\alpha}{\hat A}k\\
&\leq \frac B2\capex.
\end{aligned}
\]
This includes the output cutoff. Together, these cases
establish the upper bound in \eqref{eq:out-policy-value}
for every environment and every managerial best response.

\medskip
\noindent\textbf{Step 4: Exactness and an approximating sequence.}
Consider $q=b=0$. For $k<k_1$, managerial utility is $-k\leq0$.
For $k\geq k_1$, it equals
\[
\frac B2\capex-\left(1+\frac B2\right)k.
\]
This expression is strictly decreasing and is positive at
$k_1$ by \eqref{eq:out-threshold-bounds}. Thus $k_1$ is the
unique best response. The informed benchmark is zero, and
regret equals
\[
k_1+\frac B2(\capex-k_1)
=\frac B2\capex+\left(1-\frac B2\right)k_1.
\]
This proves \eqref{eq:out-policy-value}.

For $n=1,2,\ldots$, retain the same menu $[0,\capex]$ and set
\begin{equation}
\label{eq:out-sequence}
k_{1,n}:=\frac{B\capex}{(B+2)(n+1)},
\qquad
w_n(k,y)=
\begin{cases}
\displaystyle\frac B2(\capex-k),
&k_{1,n}\leq k\leq\capex,\quad y\leq(1+r)k,\\[4pt]
0,&\text{otherwise}.
\end{cases}
\end{equation}
Every threshold satisfies \eqref{eq:out-cap}, so each contract
is admissible. Its exact worst-case regret is
\begin{equation}
\label{eq:out-sequence-value}
\begin{aligned}
\wcr([0,\capex],w_n)
&=\frac{\alpha B}{2\alpha+B}K
  \left(1+\frac{2-B}{(B+2)(n+1)}\right)\\
&\downarrow\frac{\alpha B}{2\alpha+B}K.
\end{aligned}
\end{equation}
Lemma~\ref{lem:out-lower-bound} supplies the matching lower
bound, proving \eqref{eq:out-value-bounds}.

\end{proof}
\begin{proof}[Proof of Corollary \ref{cor:output-value}]
Every admissible investment-only contract remains admissible when output-contingent compensation is allowed. This establishes the upper bound.

First suppose $B\geq2$. Fix any admissible outcome-contingent
contract and write $\capex=\max\menu$. In the admissible
environment $q(k)=0$ and $b(k)=2k$, the manager's payoff
and headquarters' regret both equal $k+w(k,0)$. Since
$\capex$ is available and compensation is nonnegative,
every managerial best response generates regret of at
least $\capex$. The productive environment $q(k)=Ak$,
$b(k)=0$ also gives the lower bound
$\alpha(K-\capex)$. Consequently,
\[
\wcr(\menu,w)
\geq
\max\{\capex,\alpha(K-\capex)\}
\geq
\frac{\alpha K}{1+\alpha}
=\regret^*,
\]
where the last equality follows from
Proposition \ref{prop_contract_k}(i).
Together with the upper bound, this proves equality
when $B\geq2$.

Now suppose $0<B<2$. Lemma \ref{lem:out-lower-bound} gives
\[
\inf_{\menu,w}\wcr(\menu,w)
\geq
\frac{\alpha B}{2\alpha+B}K.
\]
It therefore suffices to show that
\[
\regret^*
\leq
\frac98\frac{\alpha B}{2\alpha+B}K.
\]
We verify this inequality in each case of
Proposition \ref{prop_contract_k}.

In case~(i), $B\geq\hat A=1+\alpha$, so
$\alpha\leq B-1$. Since $B<2$,
\[
\begin{aligned}
\frac{\regret^*}{\alpha BK/(2\alpha+B)}
&=\frac{2\alpha+B}{B(1+\alpha)}\\
&\leq\frac{3B-2}{B^2}\\
&=\frac98-\frac{(3B-4)^2}{8B^2}
\leq\frac98.
\end{aligned}
\]

In case~(ii),
\[
\begin{aligned}
\frac{\regret^*}{\alpha BK/(2\alpha+B)}
&=\left(1-\frac{\alpha}{B}\right)
  \left(1+\frac{2\alpha}{B}\right)\\
&=\frac98
  -2\left(\frac{\alpha}{B}-\frac14\right)^2
\leq\frac98.
\end{aligned}
\]

In case~(iii), the cap equation
\eqref{eq:konly-crossing} implies
\[
\begin{aligned}
\frac{\regret^*}{\alpha BK/(2\alpha+B)}
&=\frac{(2\alpha+B)(K-\capex^*)}{BK}\\
&=1+\frac{2a}{BK\capex^*}.
\end{aligned}
\]
Here $a=(B/2)t^2(1-2t/K)\geq0$,
$\capex^*\geq t$, and $t/K\in[1/3,1/2)$.
Hence
\[
1+\frac{2a}{BK\capex^*}
\leq
1+\frac{t}{K}\left(1-\frac{2t}{K}\right)
\leq
\frac{10}{9}
<
\frac98.
\]
The three cases establish the claimed lower bound.

Finally, take $B=1$ and $\alpha=1/4$.
Proposition \ref{prop_contract_k}(ii) gives
$\regret^*=3K/16$. These parameters satisfy the conditions
of Proposition \ref{prop:out-threshold}, which gives
\[
\inf_{\menu,w}\wcr(\menu,w)
=\frac{K}{6}
=\frac89\regret^*.
\]
Thus the factor $8/9$ cannot be increased. This establishes
sharpness in terms of the infimum without asserting that
an outcome-contingent contract attains it.
\end{proof}
\end{document}